\definecolor{redorg}{RGB}{215,48,39}
\definecolor{orangeorg}{RGB}{253,174,97}
\definecolor{blueind}{RGB}{69,117,233}
\definecolor{cyanind}{RGB}{116,173,209}
\definecolor{greenind}{RGB}{171,217,233}
\newcommand{\pr}[1]{\mathbb{P}(#1)}
\newcommand{\ZR}[1]{Z^{I}_{#1}}
\newcommand{\ZL}[1]{Z^{O}_{#1}}
\newcommand{\XR}[1]{X^{I}_{#1}}
\newcommand{\XL}[1]{X^{O}_{#1}}
\newcommand{\pir}[1]{\pi^I_{#1}}
\newcommand{\pil}[1]{\pi^O_{#1}}
\newcommand{\alphar}[1]{\alpha^I_{#1}}
\newcommand{\alphal}[1]{\alpha^O_{#1}}
\newcommand{\taur}[1]{\tau^I_{#1}}
\newcommand{\taul}[1]{\tau^O_{#1}}
\newcommand{\nbr}{n_I}
\newcommand{\nbl}{n_O}
\newcommand{\QR}{Q_I}
\newcommand{\QL}{Q_O}
\newcommand{\obs}{\mathbf{X}}
\newcommand{\lat}{\mathbf{Z}}
\newcommand{\aff}{A}
\newcommand{\KL}{\mathrm{KL}}
\newcommand{\SBM}{\operatorname{SBM}}
\newcommand{\ICL}{\operatorname{ICL}}
\newcommand{\AUC}{\operatorname{AUC}}
\newcommand{\Ind}{\mathrm{Ind}}
\newcommand{\MLVSBM}{\mathrm{MLVSBM}}
\newcommand{\pen}{\operatorname{pen}}
\newcommand{\ind}{\mathds{1}}
 \DeclareMathOperator*{\argmax}{arg\,max}
\newcommand{\OA}[2]{\textcolor{gray}{#1}\textcolor{black}{#2}}
\newtheorem{proposition}{Proposition}
\newtheorem*{remark}{Remark}
\title{A Stochastic Block Model Approach for the Analysis of Multilevel Networks: an Application to the Sociology of Organizations}
\author{Saint-Clair Chabert-Liddell\thanks{Université Paris-Saclay, AgroParisTech, INRAE, UMR MIA-Paris, 75005, Paris, France} \and Pierre Barbillon\footnotemark[1] \and Sophie Donnet\footnotemark[1] \and Emmanuel Lazega\thanks{Institut d'Études Politiques de Paris, France}\\ 
}
\date{}
\begin{document}

\maketitle

\begin{abstract}
 \OA{}{ A multilevel network is defined as the junction of two interaction networks, one level representing the interactions between individuals and the other the interactions between organizations. The levels are linked by an affiliation relationship, each individual belonging to a unique organization. 
A new Stochastic Block Model is proposed as a unified probalistic framework tailored for multilevel networks.
This model contains latent blocks accounting for heterogeneity in the patterns of connection within each level and introducing dependencies between the levels. The sought connection patterns are not specified a priori which makes this approach flexible.
Variational methods are used for the model inference and an Integrated Classified Likelihood criterion  is developed for choosing the number of blocks and also for deciding whether the two levels are dependent or not.
A comprehensive simulation study exhibits the benefit of considering this approach, illustrates the robustness of the clustering and highlights the reliability of the criterion used for model selection.
This approach is applied on a sociological dataset collected during a television program trade fair, the inter-organizational level being the economic network between companies and the inter-individual level being the informal network between their representatives. It brings a synthetic representation of the two networks unraveling their intertwined structure and confirms the \emph{coopetition} at stake. }  
\end{abstract}

\emph{Keywords: }Latent variable model, Hierarchical modeling, Social network, Variational inference 

\section{Introduction}

The statistical analysis of network data has been a hot topic for the last decade.  The last few years witnessed a growing interest for multilayer networks   \citep[see][]{kivela2014multilayer, bianconi2018multilayer, giordano2019analyzing}. A particular case of multilayer networks are  multilevel networks \OA{}{where each level is a layer and an affiliation relationship represents the inter-layer}. 
\OA{}{Multilevel networks are used across many fields such as  sociology \citep{lazega2015multilevel} or environmental science \citep{hileman2018network}. In particular they arise in the sociology of organizations and collective action when willing to study jointly the social network  of individuals and the interaction network of organizations the individuals belong to. Indeed,  the individuals  not only interact with each others but  are also members of interacting  organizations. This approach is quite generic in the social sciences and all the phenomena of \emph{coopetition} and the maintenance of social inequalities can fall within the scope of this approach \citep{lazega2016structural}. It is also gaining attention as a way to articulate social network analysis and the life course studies \citep{vacchianoworking}.} Following \cite{lazega2015multilevel}, one might think that these two types of interactions (between individuals  and  between organizations)  are interdependent, the  individuals shaping their organizations and the organizations having an influence on the individuals. We aim to propose a statistical model for multilevel networks in order to understand how the two levels are intertwined and how one level impacts the other. 

In what follows,  a multilevel network is defined as the collection of an inter-individual network, an inter-organizational network and the affiliation of the individuals to the organizations. Besides, we assume that  the individuals belong to a unique organization.
Such a dataset is studied by \citet{lazega2008catching}, some researchers in cancerology being the individuals and their  laboratories the organizations. 
\citet{brailly2016embeddedness} deal with another dataset concerned with  the economic network of audiovisual firms and the informal network of their sales representatives during a trade fair. This latter dataset will be analyzed in this paper.

In the last years,  the Stochastic Block Model    \citep[SBM developed  by][]{holland1983stochastic,snijders1997estimation}  has become a popular tool to model  the heterogeneity of connection in a network, assuming that the actors at stake are divided into blocks (clusters) and that the members of a same  block  share a similar profile of connectivity. 
Compared to other graph clustering methods such as modularity maximization, hierarchical clustering or spectral clustering \citep[see][and references therein]{kolaczykstatistical},  the SBM is a generative model\OA{}{, it shares with the generalized blockmodeling \citep{doreian2005generalized} that they} can \OA{}{both} fit to a wide range of topologies since they gather  into blocks the nodes that are structurally equivalent.  \OA{}{However, contrary to the generalized blockmodeling which seeks a pre-specified structure in the network with given ideal blocks,
the SBM is agnostic and is aimed to unravel any kind of block structure which may shape the data.} This includes but is not restricted to the detection of assortative communities where the probability of connection within a block is higher than the probability of connection between the blocks. \OA{}{  Moreover, the probabilistic generative model allows the modeler to have a unified framework for model selection and natural extensions such as dealing with non binary dyads and link prediction.}  The SBMs have been extended to  particular types of multilayer networks :  \cite{barbillon2017stochastic} propose an SBM for multiplex networks and \cite{matias2017statistical} an SBM for time-evolving networks. In this paper, we propose an SBM suited to multilevel networks (MLVSBM).

\paragraph{Our contribution}
 In a few words, we model the heterogeneity in the  inter-individual and inter-organizational connections by introducing blocks of individuals and blocks of organizations, the blocks containing homogeneous groups of actors (individuals or organizations)  with respect to  their connectivity.  
The two levels are assumed to be interdependent through  their latent blocks. More specifically, the latent blocks of the inter-individual level depend on the latent blocks of the inter-organizational level and the affiliation.  This bi-clustering approach allows us to determine how groups of organizations influence the connectivity patterns of their individuals. \OA{}{Note that the hierarchical model does not assume a causal effect of the blocks of organizations on the blocks of individuals but an interdependence between the two sets of blocks.}
 
Due to the latent variables, the estimation of the parameters is a complex task. 
We resort to a variational version of the Expectation-Maximization (EM) algorithm.   For  the SBM, the variational approach \citep{jordan1999introduction,blei2017variational} has proven its efficiency for deriving maximum likelihood estimates  \citep{daudin2008mixture, mariadassou2010uncovering,  barbillon2017stochastic} and for Bayesian inference \citep{latouche2012variational, come2015model}.  
\OA{}{In the latent block model which is suited for bipartite network, the variational estimates have also been successfully applied \citep{govaert2008block}.}
In this paper, we obtain approximate maximum likelihood  estimates by an ad-hoc version of the  variational EM algorithm.

Another important task is the choice of the number of blocks. We propose  an adapted version of the Integrated Complete Likelihood (ICL) criterion. First developed by \cite{biernacki2000assessing} for mixture models as an alternative to the Bayesian Information Criterion (BIC), it was then adapted by \cite{daudin2008mixture}   to the SBM. The ICL has since illustrated its efficiency and relevance for various SBMs and their extensions such as multiplex network \citep{barbillon2017stochastic}, dynamic SBM \citep{matias2017statistical} or degree corrected SBM \citep{yan2016bayesian}. A further reference for dynamic SBMs is \cite{bartolucci2018dealing}.
\OA{}{Besides, a critical issue in sociology is to verify the multilevel interdependence hypothesis in a multilevel network, i.e. if the two levels (inter-individual and inter-organizational) should be analyzed jointly or if a separate analysis is sufficient. We thus propose a criterion to decide whether the two levels are independent or not.}

\paragraph{Related works}

The term \emph{multilevel network} arises in the statistical  literature for a wide variety of complex networks.  For instance,
 \citet{zijlstra2006multilevel} adapt the p2-model to handle multiple  observations of a network,  
 \cite{sweet2014hierarchical} extend  the Mixed Membership Stochastic Block Model \citep{airoldi2008mixed} to the hierarchical network model framework \citep{sweet2013hierarchical}  for the same type of data. 
\citet{snijders2017stochastic} discusses the use of the stochastic actor-oriented model \citep{snijders2001statistical} for temporal and multivariate networks.

When dealing with the  multilevel networks we defined before,   \citet{wang2013exponential} adopt an exponential random graph model (ERGM) strategy that is used in applications across many fields such as environmental science \citep{hileman2018network} or sociology \citep[chapter 10-11, 13-14]{lazega2015multilevel}.  When focusing on a clustering approach,  \cite{vziberna2014blockmodeling} develops three general approaches for blockmodeling multilevel networks. First, the separate analysis consists in clustering the levels separately or using the clustering of one level on the other. Second, the conversion approach converts the level of the organizations into a new kind of interaction between individuals, the interactions are then aggregated into a single layer network; this is close to the approach taken by  \cite{barbillon2017stochastic} who  transform the inter-organizational network into an inter-individual network thus adopting a multiplex network approach (the individuals interconnect directly or through the organizations they belong to).  
The third approach  is called the true multilevel approach and is the closest to the one we propose on this paper.  

\OA{}{\citet{vziberna2014blockmodeling} and the extensions in \citet{vziberna2019blockmodeling,vziberna2020k} to  a more general set of multilayer networks (called linked networks) 
use a generalized blockmodeling framework \citep{doreian2005generalized}.  
Contrary to this deterministic approach, we resort to a probalistic generative model for all the reasons stated above. The MLVSBM additionally provides us with a natural criterion for detecting the interdependence between the two levels. Furthermore, we
 explicitly take into account the constraint of having a unique affiliation per individual inherent to these multilevel datasets
and do not consider the affiliation as a bipartite network.}

Also, note that  the multiplex SBM approach applied to a multilevel network suggested by \cite{barbillon2017stochastic} is only applicable when the numbers of individuals and organizations are close. Indeed it requires to  duplicate the data of the inter-organizational level to fit the size of the inter-individual level. Furthermore, it only provides a clustering of the individuals and not two clusterings, one of the individuals and one of the organizations. \OA{}{In contrast, our MLVSBM  does not need to transform the data into a multiplex network and is able to obtain a clustering of the nodes within each level}.

 If we release the constraint of the unique affiliation, then the inter-level can be modeled by a latent block model and we obtain a particular case of the multipartite SBM of \cite{bar2018block}. However,  the interactions between individuals and organizations are considered at the same level as the affiliations, and the clustering might be strongly influenced  by the number of individuals   in each  organization.  

Finally, our work is also different from the SBM with edges covariates \citep{mariadassou2010uncovering} with the individuals as nodes and the inter-organizational network as edges covariates. Indeed, in that case, the clustering obtained for the individuals is the remaining structure of the inter-individual level once the effect of the covariates has been taken into account. In addition, this model does not provide a clustering of the organizations.

\paragraph{Outline of the paper}  The paper is organized as follows. The  SBM adapted to multilevel networks (MLVSBM) is defined in Section \ref{section:2}. We also give conditions  guaranteeing  the  independence between levels and  the  identifiability  of the parameters.
The inference strategy and the model selection criterion  are provided  in Section \ref{sec:inf}. 
The proof of the independence between levels, of the  identifiability and  the details on the variational EM and the ICL criterion are postponed to the Appendix sections. In Section \ref{section:simulation}, we present  an extensive simulation study illustrating the relevance of our inference method, model selection criterion and procedure. Section \ref{section:application} is dedicated  to the analysis of a sociological dataset  by our MLVSBM. 
Finally we discuss our contribution and future works in Section \ref{section:discussion}.

\section{A multilevel stochastic block model (MLVSBM)}\label{section:2}

\paragraph{Dataset}
\noindent Let us consider  $\nbr$ individuals involved in $\nbl$ organizations. 
We encode the networks into adjacency matrices as follows. Let $\XR{}$ be  the binary $\nbr \times \nbr$ matrix   representing the inter-individual network. $\XR{}$ is  such that :  $\forall(i, i') \in \{1, \dots, \nbr\}^2$: 
\begin{equation}\label{eqXR}
\XR{ii'} = \begin{cases}
  1 & \text{if there is an interaction from individual  $i$ to  individual $i'$}, \\
  0 & \text{otherwise}.	
 \end{cases} 
\end{equation}
$\XL{}$ is the  binary   $\nbl \times \nbl$  matrix representing the inter-organizational network,  $\forall(j, j') \in \{1, \dots, \nbl \}^2$: 
\begin{equation}\label{eqXO}
\XL{jj'} = \begin{cases}
  1 & \text{if there is an interaction from organization $j$ to organization $j'$,} \\
 0& \text{otherwise}.
 \end{cases}
\end{equation}

\begin{remark}
In general, no  self-loop are considered in the network, thus the interactions are defined for $i\neq i'$ and $j \neq j'$.  Moreover, if the interactions are undirected then
\begin{equation*}
    \XR{ii'} = \XR{i'i}  \quad \forall(i, i')    \quad \mbox{ or/and }  \quad 
    \XL{jj'} = \XL{j'j}  \quad \forall(j, j').
\end{equation*}
In what follows, we present the methodology for   undirected networks. However, all the results can be adapted to directed networks without any difficulty.  
\end{remark}

\noindent Let $\aff$ be the affiliation matrix. $\aff$ is a $\nbr \times \nbl$ matrix such that: 
\begin{equation*}
    \aff_{ij} = \begin{cases}
    1  & \text{if individual i belongs to organization j,} \\
    0  & \text{otherwise} \end{cases}.
\end{equation*}
$\aff$ is such that $\forall i = 1,\dots,\nbr$, $\sum_{j=1}^{\nbl}  A_{ij}=1$
since we assume that any individual belongs to a unique organization. A synthetic view of a generic dataset is provided in Figure \ref{fig:Matrice_Chercheur_Labo}.

\begin{figure}
    \centering

 {
 \begin{tabular}{|c||ccc||ccc|}
\multicolumn{1}{c }{}& \multicolumn{3}{c }{$\overbrace{\mbox{ \hspace{6em}}}^{\nbr}$}&  \multicolumn{3}{c }{$\overbrace{\mbox{ \hspace{6em}}}^{\nbl}$}\\

\hline

\cellcolor{blue!15} Individual $1$      &0 &              &  1   &0 &010   & 0\\
\cellcolor{blue!15} \vdots              &  & $\XR{ii'}$   &      &  &$A_{ij}$ &    \\
\cellcolor{blue!15}                     &  &              &      &  &         &    \\
\cellcolor{blue!15} Individual  $\nbr$  &1 &              &0  &  0 &      --      &01\\
\hline
\hline
\cellcolor{red!15}  Organization   $1$  & &                      &    &0  & & 1\\
\cellcolor{red!15}  \vdots              & &                        &   & & $\XL{jj'}$ &\\
\cellcolor{red!15}                      & &              &   &      &                &\\
\cellcolor{red!15} Organization $\nbl$  & &                      &    & 1 & & 0\\
\hline
\hline
& \cellcolor{blue!15}  \begin{turn}{-90}Individual $1$\end{turn}& \cellcolor{blue!15}  $\cdots$ & \cellcolor{blue!15}\begin{turn}{-90} Individual  ${\nbr}$ \end{turn}&  \cellcolor{red!15} \begin{turn}{-90}  Organization   $1$\end{turn}&$\cdots$ \cellcolor{red!15} & \cellcolor{red!15}  \begin{turn}{-90}  Organization $\nbl$\end{turn}\\
\hline

\end{tabular}
}

    \caption{Matrix representation of a multilevel network}
    \label{fig:Matrice_Chercheur_Labo}
\end{figure}

\vspace{1em}

We propose a joint modeling of the inter-individual and inter-organizational networks based on an extension of the SBM.
More precisely,  assume that the  $\nbl$  organizations  are divided into $\QL$ blocks and that the $\nbr$ individuals are  divided into $\QR$ blocks. Let $\ZL{} = (\ZL{1}, \dots, \ZL{\nbl})$  and  $\ZR{} = (\ZR{1}, \dots, \ZR{\nbr})$  be   such that    $\ZL{j} = l$  if organization $j$ belongs to block $l$  ($l \in \{1,\dots,\QL\}$) and $\ZR{i} = k$  if individual $i$ belongs to block $k$ ($k \in \{1,\dots,\QR\}$). 

Given these clusterings, we assume that the interactions between organizations and the interactions between individuals are independent and distributed as follows: 
\begin{equation}\label{eq:connec}
    \begin{array}{lcr}
    \mathbb{P}(\XL{jj'} = 1 | \ZL{j}, \ZL{j'}) &=& \alphal{\ZL{j}\ZL{j'}}, \\
    \mathbb{P}(\XR{ii'} = 1 | \ZR{i}, \ZR{i'}) &=& \alphar{\ZR{i}\ZR{i'}}.
    \end{array}
\end{equation}
As a consequence, the blocks gather nodes (blocks of individuals on the one hand and blocks of organizations on the other hand) sharing the same profiles of connectivity. \\
In order to take into account the fact that organizations may shape the individual behaviors, we assume that   the   memberships of the individuals ($\ZR{}$) depend on the blocks of the organizations ($\ZL{}$) they are affiliated to. More precisely, we set:
\begin{equation}\label{eq:modZI}
    \mathbb{P}(\ZR{i} = k |  \ZL{j}, \aff_{ij} = 1 ) = \gamma_{k\ZL{j}} 
    \quad \forall i \in \{1, \dots, \nbr \}
    \quad \forall k \in \{1, \dots, \QR \} ,
\end{equation}
where $\gamma$ is a $\QR \times \QL$ matrix such that $\sum_{k=1}^{\QR} \gamma_{kl} = 1$  $\forall l \in \{1,\dots,\QL\}$. 
The $(\ZL{j})$ are assumed to be independent random variables distributed as 
\begin{equation}\label{eq:modZO}
    \mathbb{P}(\ZL{j} = l) = \pil{l}, \quad 
    \quad \forall j \in \{1, \dots, \nbl \}
    \quad \forall l \in \{1, \dots, \QL \},
\end{equation}
with $\sum_{l=1}^{\QL} \pil{l}= 1$. \\
\noindent Equations \eqref{eq:modZI} and \eqref{eq:modZO} state that  the clustering of an individual is not completely driven by his/her behavior but is also shaped by the clustering of the organization he/she belongs to.
In particular, if $\QL = \QR$ and $\gamma$ is equal to the identity matrix (up to a reordering of the rows) then, the clustering of the individuals is completely determined by the clustering of the organizations. At the opposite, if all the columns of $\gamma$ are equal, then the clustering of the individuals is independent on the clustering of the organizations. This point will be developed hereafter. \\

Equations  \eqref{eq:connec}, \eqref{eq:modZI} and  \eqref{eq:modZO}  define a joint modeling of  $\XR{}$ and $\XL{}$.  In what follows, we set $\theta = \{\pil{}, \gamma\,, \alphal{}, \alphar{} \}$ the vector of the unknown parameters, $\obs = \{\XR{}, \XL{}\}$ are the observed variables and $\lat = \{\ZR{}, \ZL{}\}$ the latent variables.  The DAG of the MLVSBM is plotted in Figure \ref{fig:dag}. An illustration of the MLVSBM for a small multilevel network is represented in Figure \ref{fig:vuenetwork}.

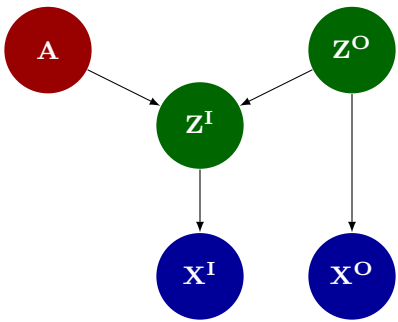
\begin{figure}
    \centering
\begin{tikzpicture}
	\tikzstyle{every edge}=[-,>=stealth',shorten >=1pt,auto,thin,draw]
	\tikzstyle{every state}=[draw=none,text=white,scale=1.2, font=\scriptsize, transform shape]
	\tikzstyle{every node}=[fill=green!50!black]
	\node[state,fill=red!60!black] (XRL) at (-2,0) {$\mathbf{\aff}$};
	\node[state,fill=green!40!black] (ZL) at (2,0) {$\mathbf{\ZL{}}$};
	\node[state,fill=green!40!black] (ZR) at (0,-1) {$\mathbf{\ZR{}}$};
	\node[state,fill=blue!60!black] (XR) at (0,-3) {$\mathbf{\XR{}}$};
	\node[state,fill=blue!60!black] (XL) at (2,-3) {$\mathbf{\XL{}}$};

	\draw[->,>=latex] (XRL) -- (ZR);
	\draw[->,>=latex] (ZR) -- (XR);
	\draw[->,>=latex] (ZL) -- (ZR);
	\draw[->,>=latex] (ZL) -- (XL);
\end{tikzpicture}
    \caption{DAG of the stochastic block model for multilevel network (MLVSBM)}
    \label{fig:dag}
\end{figure}

\paragraph{Likelihood}
From Equations \eqref{eq:connec}, \eqref{eq:modZI} and  \eqref{eq:modZO}, we  derive the complete log-likelihood for  an  undirected MLVSBM:
\begin{align}\label{eq:complelik}
	\log \ell_{\theta} &\left( \XR{}, \XL{}, \lat | \aff \right) =  \log \ell_{\pil{}}(\ZL{})+
	\log\ell_{\gamma} (\ZR{} | \ZL{}, \aff)  
	+\log \ell_{\alphar{}} ( \XR{} | \ZR{})+\log \ell_{\alphal{}} ( \XL{} | \ZL{}) \nonumber \\
	&= 
     \sum_{j,l} \mathds{1}_{\ZL{j}=l}\log\pil{l} + \sum_{i,k} \mathds{1}_{\ZR{i}=k} \sum_{j,l} \aff_{ij} \mathds{1}_{\ZL{j}=l}\log \gamma_{kl}\\
		& \;  + \frac{1}{2} \sum_{i' \neq i} \sum_{k, k'} 
			\mathds{1}_{\ZR{i}=k}\mathds{1}_{\ZR{i'}=k'} \log \phi(\XR{ii'}, \alphar{kk'})
		 + \frac{1}{2} \sum_{j' \neq j} \sum_{l, l'} 
			\mathds{1}_{\ZL{j}=l}\mathds{1}_{\ZL{j'}=l'} \log \phi(\XL{jj'} ,\alphal{ll'}), \nonumber
\end{align}

where $\phi (x, a) = a^x(1-a)^{1-x}$.  
\begin{remark}
Note that the factors $1/2$  in Equation \eqref{eq:complelik} derive from the fact that we consider undirected networks.  If one or both of the networks are directed, then the corresponding $1/2$ disappears. \end{remark}

The log-likelihood of the observations $\ell_{\theta}(\obs | A)$ is obtained by integrating out the latent variables $\lat$ in Equation \ref{eq:complelik}. As soon as $\nbl$, $\nbr$, $\QL$, or $\QR$ increase, this summation over all the possible clusterings $\ZR{}$ and $\ZL{}$ cannot be performed within a reasonable computational time. As a consequence, we will resort to the variational EM algorithm to maximize this likelihood (see Section \ref{sec:inf}).

\paragraph{Independence}
We now derive conditions for the structural independence between levels in terms of parameters equality.

\begin{proposition}\label{prop:independence}
In the MLVSBM, the two following properties are equivalent: 
\begin{enumerate}
  \item $\ZR{}$ is independent on  $\ZL{}$, 
  \item $\gamma_{kl} = \gamma_{kl'} \quad \forall l, l' \in \{1, \dots, \QL\}$,
\end{enumerate}
and imply that:  
\begin{enumerate}
  \item[3.] $\XR{}$ and $\XL{}$ are independent.
\end{enumerate}

\end{proposition}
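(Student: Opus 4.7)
The plan is to handle the three implications in the order $(2) \Rightarrow (1)$, $(1) \Rightarrow (2)$, and finally $(1) \Rightarrow (3)$, exploiting the DAG of Figure~\ref{fig:dag} at each step. The common backbone is the factorization
\[
\mathbb{P}(\ZR{} = z^I \mid \ZL{} = z^O, \aff) \;=\; \prod_{i=1}^{\nbr} \gamma_{z^I_i,\, z^O_{j(i)}},
\]
where $j(i)$ denotes the unique organization to which individual $i$ is affiliated (well defined because $A$ has exactly one $1$ per row).

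For $(2) \Rightarrow (1)$, I would set $\tilde\gamma_k := \gamma_{k1} = \cdots = \gamma_{k\QL}$ and substitute into the product above, so that $\mathbb{P}(\ZR{} = z^I \mid \ZL{} = z^O, \aff) = \prod_i \tilde\gamma_{z^I_i}$ no longer depends on $z^O$. This conditional factorization, together with the marginalization of $\ZL{}$ through Equation~\eqref{eq:modZO}, yields $\mathbb{P}(\ZR{} = z^I, \ZL{} = z^O \mid \aff) = (\prod_i \tilde\gamma_{z^I_i})\prod_j \pil{z^O_j}$, which factorizes the joint of $\ZR{}$ and $\ZL{}$ and hence gives independence.

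For $(1) \Rightarrow (2)$, I would argue on individual marginals: for any fixed $i$, choose $l, l' \in \{1,\dots,\QL\}$ with $\pil{l}, \pil{l'} > 0$ (parameters with zero mass on some block are degenerate and can be removed), and condition on $\ZL{j(i)} = l$ and $\ZL{j(i)} = l'$ respectively. By summing out the other $\ZL{j'}$ for $j' \neq j(i)$ in the product factorization, the conditional law of $\ZR{i}$ given $\ZL{j(i)} = l$ is exactly $(\gamma_{kl})_{k=1,\dots,\QR}$. The assumed independence of $\ZR{}$ and $\ZL{}$ forces these marginals to coincide for every $l$, which is precisely $\gamma_{kl} = \gamma_{kl'}$.

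Finally, for $(1) \Rightarrow (3)$, I would rely on the DAG: conditional on $\lat$, the variables $\XR{}$ and $\XL{}$ are independent and each depends only on its own layer, i.e.\ $\mathbb{P}(\XR{},\XL{} \mid \ZR{}, \ZL{}, \aff) = \mathbb{P}(\XR{} \mid \ZR{})\,\mathbb{P}(\XL{} \mid \ZL{})$ by Equation~\eqref{eq:connec}. Summing over $\lat$ and using (1) to split $\mathbb{P}(\ZR{},\ZL{} \mid \aff) = \mathbb{P}(\ZR{} \mid \aff)\,\mathbb{P}(\ZL{})$ gives
\[
\mathbb{P}(\XR{}, \XL{} \mid \aff) = \Bigl(\sum_{\ZR{}} \mathbb{P}(\XR{} \mid \ZR{})\mathbb{P}(\ZR{}\mid \aff)\Bigr)\Bigl(\sum_{\ZL{}} \mathbb{P}(\XL{} \mid \ZL{})\mathbb{P}(\ZL{})\Bigr),
\]
which is the desired factorization. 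The only mildly delicate step is $(1)\Rightarrow(2)$, since one must isolate the conditional law of a single $\ZR{i}$ given a single $\ZL{j}$ by marginalizing over all other latent coordinates; everything else is bookkeeping on the DAG factorization.
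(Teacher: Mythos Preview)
Your proposal is correct and follows essentially the same approach as the paper: the paper also proves $(2)\Rightarrow(1)$ by showing $\ell_\gamma(\ZR{}\mid\ZL{},\aff)$ reduces to a product of $\gamma_{k1}$'s, proves $(1)\Rightarrow(2)$ by isolating the conditional law of a single $\ZR{i}$ given its own $\ZL{j}$ (exactly your marginalization step), and proves $(1)\Rightarrow(3)$ by integrating over $\lat$ and splitting the joint via the assumed independence. The only cosmetic differences are that you phrase $(2)\Rightarrow(1)$ as a factorization of the joint rather than equality of conditional and marginal, and you add the harmless caveat about $\pil{l}>0$, which the paper omits.
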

This proposition is proved   in \ref{appendix:independence}.
Proposition  \ref{prop:independence} can be interpreted as follows: in the case where the clustering of the individuals does not depend on the clustering of the organizations, all column vectors of $\gamma$ are identical. Hence, under this restriction on $\gamma$, the model for multilevel network can be rewritten as the product of two independent SBMs, one for each level. 
Conversely, in the case of a strong dependence between the levels, each column of $\gamma$ will have one coefficient  close to one, the others being close to $0$. Therefore,  the individuals  affiliated to organizations belonging to the  same block of organizations will be affiliated to one block of individuals. Even if the $\gamma$'s imply a dependent relationship between the two levels, the connections of the corresponding blocks at the two levels may have different connectivity patterns since there is no constraint on the corresponding connection parameters $\alpha^O$ and $\alpha^I$.

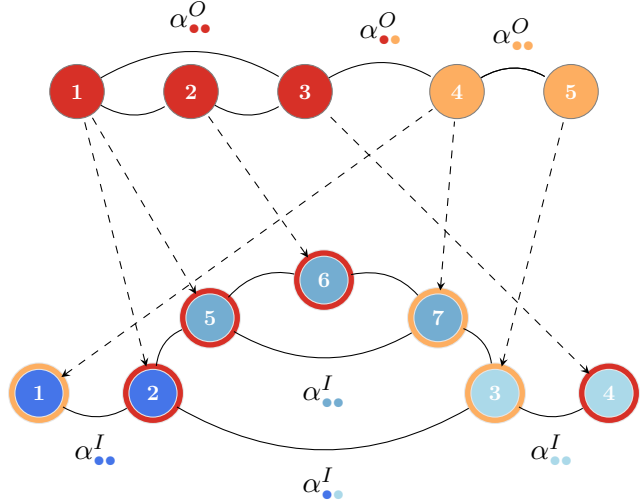
\begin{figure}
    \centering
    \begin{tikzpicture}
          \node[fill=white, scale = 1] at (3, 4.5) {};  
          \tikzstyle{every edge}=[-,>=stealth',shorten >=1pt,auto,thin,draw]
          \tikzstyle{every state}=[draw, text=white,scale=0.95, transform shape]
    
          \tikzstyle{every state}=[draw=none,text=white,scale=0.75,  transform shape]

          \tikzstyle{every node}=[fill=redorg]
          \node[state, draw=black!50] (LA1) at (0.5,3) {\textbf{1}};
          \node[state, draw=black!50] (LA2) at (2,3) {\textbf{2}};
          \node[state, draw=black!50] (LA3) at (3.5,3) {\textbf{3}}; 
          \tikzstyle{every node}=[fill=orangeorg]
          \node[state, draw=black!50] (LB1) at (5.5,3) {\textbf{4}};
          \node[state, draw=black!50] (LB2) at (7,3) {\textbf{5}};

          \path (LB1) edge [bend left] (LB2);
		  \path (LB1) edge [bend left] node[fill=white, above=.1cm]
		  {$\alphal{ \textcolor{orangeorg}{\bullet}\textcolor{orangeorg}{\bullet}}$}
		  (LB2);
		  \path (LA3) edge [bend left] node[fill=white, above=.1cm]
		  {$\alphal{ \textcolor{redorg}{\bullet}\textcolor{orangeorg}{\bullet}}$}
		  (LB1);
 		  \path (LA1) edge [bend right] (LA2);
		  \path (LA2) edge [bend right] (LA3);
		  \path (LA1) edge [bend left] node[fill=white, above=.1cm]
		  {$\alphal{ \textcolor{redorg}{\bullet}\textcolor{redorg}{\bullet}}$}
		  (LA3);
          
                    
          \tikzstyle{every node}=[fill={blueind},  double distance = 2pt]
          \node[state, draw=black!10, double=orangeorg] (RA1) at (0, -1) {\textbf{1}};
          \node[state, draw=black!10, double=redorg] (RA2) at (1.5, -1) {\textbf{2}};
          \tikzstyle{every node}=[fill={greenind}, double distance = 2pt]
     	  \node[state,draw=black!10, double=orangeorg] (RB1) at (6, -1) {\textbf{3}};
          \node[state,draw=black!10, double=redorg] (RB2) at (7.5, -1) {\textbf{4}};
          \tikzstyle{every node}=[fill=cyanind,  double distance = 2pt]
          \node[state,draw=black!10, double=redorg] (RC1) at (2.25,0) {\textbf{5}};
          \node[state,draw=black!10, double=redorg] (RC2) at (3.75,.5) {\textbf{6}};
          \node[state,draw=black!10, double=orangeorg] (RC3) at (5.25,0) {\textbf{7}};

          \tikzstyle{every edge}=[-,>=stealth', shorten >= 1pt, auto, thin, draw]
          \path (RA1) edge [bend right] node[fill=white,below=.1cm]
          {$\alphar{\textcolor{blueind}{\bullet}\textcolor{blueind}{\bullet}}$}
          (RA2);
          \path    (RB1)    edge     [bend    right]    node[fill=white, below=.1cm]
          {$\alphar{\textcolor{greenind}{\bullet}\textcolor{greenind}{\bullet}}$}  (RB2) ;
          \path 	(RC1) edge [bend right] node[fill=white,below=.1cm]
          {$\alphar{\textcolor{cyanind}{\bullet}\textcolor{cyanind}{\bullet}}$} (RC3)
            (RC1) edge [bend left] (RC2)
          	(RC2) edge [bend left] (RC3);
          \path 
          (RA2)    edge    [bend    right]    node[fill=white, below=.1cm]
	         {$\alphar{\textcolor{blueind}{\bullet}\textcolor{greenind}{\bullet}}$} (RB1)
          (RC3) edge [bend left] (RB1)
          (RA2) edge [bend left] (RC1);
          
          
		  \tikzstyle{every edge}=[dashed,>=stealth, <-,shorten >=1pt,auto,thin,draw]
		  \path (RA1) edge  (LB1);
		  \path (RA2) edge  (LA1);
		  \path (RC1) edge  (LA1);
		  \path (RC2) edge  (LA2);
		  \path (RC3) edge  (LB1);
		  \path (RB1) edge  (LB2);
		  \path (RB2) edge  (LA3);
          
        \end{tikzpicture}

    \caption{MLVSBM with inter-organizational level on the top and inter-individual level on the bottom.  The various shades of blue depict the clustering of the individuals and the various shades of red depict the clustering of the organizations. The parameters $\alpha$ over the plain links between nodes are the probabilities of connections given the nodes colors (clustering/blocks). The outer circles around the nodes of the individuals represent the blocks of the organizations they are affiliated to. The dashed links stand for the affiliations.}
    \label{fig:vuenetwork}
\end{figure}

\paragraph{Identifiability}
The identifiability conditions for the MLVSBM are given in the following proposition. 

\begin{proposition}\label{prop:identif}
The MLVSBM is identifiable up to label switching under the following assumptions:
\begin{enumerate}
\item[$\mathcal{A}$1.] All coefficients of  $\alphar{}\cdot\gamma\cdot\pil{}$ are distinct and all coefficients of $\alphal{}\cdot\pil{}$ are distinct.
\item[$\mathcal{A}$2.] $\nbr \geq 2\QR$ and $\nbl \geq \max(2\QL, \QL+\QR-1)$.
\item[$\mathcal{A}$3.] At least $2\QR$ organizations contain one individual or more.
\end{enumerate}
\end{proposition}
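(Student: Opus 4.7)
The plan is to treat the two levels sequentially and then to recover the cross-level coupling $\gamma$ from the joint law. Concretely, I would identify $(\pil{}, \alphal{})$, then $(\beta := \gamma\pil{}, \alphar{})$, and finally $\gamma$ itself, each time up to label switching.

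\textbf{Step 1 (organizations).} Marginalizing the joint law of $(\obs, \lat)$ over $\ZR{}$ and $\XR{}$ leaves the distribution of $\XL{}$, which is exactly that of a standard undirected binary SBM on $\nbl$ nodes with $\QL$ blocks and parameters $(\pil{}, \alphal{})$. The first half of $\mathcal{A}$1 (distinct coefficients of $\alphal{}\pil{}$) together with $\nbl \geq 2\QL$ from $\mathcal{A}$2 are exactly the hypotheses of the classical SBM identifiability theorem (of the type proved by Celisse, Daudin and Pierre), and yield identifiability of $(\pil{}, \alphal{})$ up to a permutation on $\{1, \ldots, \QL\}$.

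\textbf{Step 2 (individuals, marginal).} By $\mathcal{A}$3, I pick a set $\mathcal{I}$ of $2\QR$ individuals, one in each of $2\QR$ distinct organizations. Since the $\ZL{j_i}$ for $i \in \mathcal{I}$ are mutually independent and $\ZR{i}$ depends only on $\ZL{j_i}$, the blocks $(\ZR{i})_{i \in \mathcal{I}}$ are i.i.d.\ with marginal $\beta := \gamma\pil{}$. The sub-network $(\XR{ii'})_{i, i' \in \mathcal{I}}$ is therefore a standard SBM on $2\QR$ nodes with $\QR$ blocks and parameters $(\beta, \alphar{})$. The second half of $\mathcal{A}$1 (distinct coefficients of $\alphar{}\beta$) and $|\mathcal{I}| = 2\QR$ allow me to reapply the SBM identifiability theorem to obtain $(\beta, \alphar{})$ up to a permutation on $\{1, \ldots, \QR\}$.

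\textbf{Step 3 (the coupling).} At this point $\gamma$ is constrained only by $\gamma\pil{} = \beta$ together with column sums equal to $1$, leaving $(\QR - 1)(\QL - 1)$ free parameters to be pinned down from the cross-level joint law. For each individual $i$ in organization $j_i = j$, I consider the profile pair $(v_i, w_j)$, where $v_i$ collects the edges in $\XR{}$ from $i$ to one individual per organization different from $j$, and $w_j$ is the row of $\XL{}$ incident to $j$. A direct conditioning calculation gives
\begin{equation*}
\mathbb{P}(v_i, w_j) = \sum_{k=1}^{\QR} \sum_{l=1}^{\QL} \gamma_{kl}\, \pil{l}\, f_k(v_i)\, g_l(w_j),
\end{equation*}
where $f_k(v_i) = \mathbb{P}(v_i \mid \ZR{i} = k)$ is a product of Bernoullis with rate $(\alphar{}\beta)_k$ and $g_l(w_j) = \mathbb{P}(w_j \mid \ZL{j} = l)$ is a product of Bernoullis with rate $(\alphal{}\pil{})_l$, both already identified. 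Under $\mathcal{A}$1 (distinct rates) and the size assumptions of $\mathcal{A}$2, most notably $\nbl \geq \QL + \QR - 1$, a Vandermonde-type argument ensures that the two families $\{f_k\}_{k = 1}^{\QR}$ and $\{g_l\}_{l = 1}^{\QL}$ are each linearly independent. The standard uniqueness of coefficients in a bilinear expansion over two linearly independent families then pins down the weight matrix $(\gamma_{kl}\pil{l})_{k, l}$, and since $\pil{l} > 0$ is known, so is $\gamma$.

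The main obstacle is Step 3: Steps 1 and 2 reduce to standard SBM identifiability applied to each level in isolation, but $\gamma$ is invisible to either marginal level and demands a genuinely bilinear argument on the joint distribution. The delicate technical point is the linear independence of the two conditional families in a space of sufficient dimension, which is precisely where the extra condition $\nbl \geq \QL + \QR - 1$ in $\mathcal{A}$2 plays its role.
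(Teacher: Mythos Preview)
Your proposal is correct and follows essentially the same route as the paper: identify $(\pil{},\alphal{})$ and then $(\gamma\pil{},\alphar{})$ by applying the Celisse--Daudin--Pierre moment/Vandermonde argument to each level (the paper reproves that argument explicitly rather than citing it), and finally recover $\gamma$ from a bilinear cross-level observable, which in the paper is the matrix $U^{IO}_{ij}=\mathbb{P}(\XR{1,2:i}=1,\ \XL{1,(i+1):(i+j-1)}=1)=(R^I\,\gamma\,D_{\pil{}}\,{R^O}')_{ij}$. The one point you leave implicit in Step~3 is that the organizations hosting the $v_i$-target individuals must be chosen \emph{disjoint} from the organizations targeted in $w_j$ (so that $v_i\perp w_j\mid(\ZR{i},\ZL{j})$ and the factorization holds); it is this disjointness requirement---one organization for the anchor, up to $\QR-1$ for the individual targets, up to $\QL-1$ for the organizational targets---rather than linear independence per se that forces $\nbl\ge\QR+\QL-1$.
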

\noindent The set of parameters that does not verify  assumption $\mathcal{A}$1 has null Lebesgue measure. 

Assumption $\mathcal{A}$2 is very weak in practice.  Assumption $\mathcal{A}$3, on the affiliation, means that at least some organizations must not be empty and  
enough individuals belong to different organizations. 
The proof of this proposition is provided  in  \ref{appendix:identifiability}  and results from an extension of the proof given in \citet{celisse2012consistency}.

\section{Statistical Inference}\label{sec:inf}
We now present a maximum likelihood procedure and a criterion for model selection.

\subsection{Variational method for maximum likelihood estimation}

As said before,    $\ell_{\theta} (\obs | \aff)$ is obtained by integrating out  the latent variables $\lat$ in the complete data likelihood \eqref{eq:complelik}. 
However, this calculus  becomes not computationally tractable as the numbers of nodes and blocks increase.

The Expectation-Maximization algorithm (EM) \citep{dempster1977maximum} is a popular solution to maximize the likelihood of models with latent variables. However it requires the computation of $\mathbb{P}_{\theta}(\lat | \obs, \aff)$ which is also not tractable in our case. The variational version of the EM algorithm is a powerful solution for such cases. It was first used for the SBM by  \cite{daudin2008mixture}. 

In a few words, the variational EM algorithm  maximizes the so-called variational bound i.e. a lower bound of the log-likelihood  denoted   $\mathcal{I}_{\theta}(\mathcal{R}(\lat | \aff))$  and defined as follows:
\begin{eqnarray}\label{eq:vbound}
 \mathcal{I}_{\theta}(\mathcal{R}(\lat | \aff)) & :=&  \mathbb{E}_\mathcal{R} \left[ \ell_{\theta} \left( \lat, \obs | \aff \right) \right] + \mathcal{H}\left( \mathcal{R}(\lat | \aff) \right) \\ 
 & =&  \ell_{\theta} \left( \obs | \aff \right) - \KL \left(\mathcal{R}(\lat | \aff) \| \mathbb{P}_{\theta}(\lat | \obs, \aff) \right) \leq	\ell_{\theta} (\obs | \aff), \nonumber
\end{eqnarray}
where $\KL$ is the Kullback-Leibler divergence, $\mathcal{H}$ is the Shannon entropy: $\mathcal{H}(P) = \mathbb{E}_{P}[-\log(P)]$ and
$\mathcal{R}(\lat | \aff)$ is an approximation of the true distribution   $\mathbb{P}_{\theta}(\lat | \obs, \aff)$. In our context, and following \cite{daudin2008mixture}, we propose to choose $\mathcal{R}(\lat | \aff)$ in  a family of factorized distributions, resulting into a mean field approximation  $\mathcal{R}(\lat | \aff)$ defined as:
\begin{equation}\label{eq:R}
    \mathcal{R}(\lat | \aff) = \prod_{i=1}^{\nbr} \prod_{k=1}^{\QR}(\tau^{I}_{ik})^{\ind_{\ZR{i} = k}} \prod_{j=1}^{\nbl} \prod_{l=1}^{\QL}(\tau^{O}_{jl})^{\ind_{\ZL{j} = l}} , 
\end{equation}
where  $ \tau^I_{ik} = \mathbb{P}_{ \mathcal{R}}(\ZR{i} = k)$ and $ \tau^O_{jl} = \mathbb{P}_{ \mathcal{R}}(\ZL{j} = l)$. \\
Inputting Equations \eqref{eq:complelik} and \eqref{eq:R} into Equation \eqref{eq:vbound}, the variational bound for the MLVSBM can be written as follows:
\begin{eqnarray*}
    \mathcal{I}_{\theta}(\mathcal{R}(\lat | \aff)) & =& 
     \sum_{j,l} \taul{jl}\log\pil{l}   + \sum_{i,k} \taur{ik} \sum_{j,l} \aff_{ij} \taul{jl}\log \gamma_{kl} \\
		&& + \quad \frac{1}{2} \sum_{i' \neq i} \sum_{k, k'} 
			\taur{ik}\taur{i'k'} \log \phi \left( \XR{ii'} , \alphar{kk'}\right) + \frac{1}{2} \sum_{j' \neq j} \sum_{l, l'} 
			\taul{jl}\taul{j'l'} \log \phi\left( \XL{jj'} \alphal{ll'} \right)\\
			&& - \sum_{i,k} \taur{ik}\log \taur{ik} - \sum_{j,l} \taul{jl}\log \taul{jl}.
\end{eqnarray*}
The variational EM algorithm consists in iterating two steps. Step \textsf{VE}  maximizes the variational bound with respect to the parameters of the approximate distribution defined in Equation \eqref{eq:R}. This is equivalent to minimizing the Kullbach-Leibler divergence term. Step \textsf{M} maximizes the variational bound with respect to the model parameters $\theta$. The procedure is given in Algorithm \ref{algo:vem} and details of the calculus and algorithm are developed in \ref{appendix:vem}.
Algorithm \ref{algo:vem} can be slightly modified to handle missing data (\OA{}{dyads which are not observed in any of the two levels}) by summing up on observed dyads only.
An interesting feature of the MLVSBM is to make use of one level to help the prediction of missing dyads of the other level.
\begin{remark}
  \OA{}{Although the family of the variational distributions does not consider the affiliation matrix $\aff{}$, the minimization of the Kullback-Leibler divergence between the variational distribution and $\mathbb{P}_{\theta}(\lat | \obs, \aff)$ induces an indirect dependence on $A$ in the variational distribution.
  One may consider more complex distributions but the simulation studies show that the inference algorithm is able to retrieve properly the dependence between the $\ZR{}$s and the $\ZL{}$s in this family of distributions.}
\end{remark}

\begin{algorithm}[hbt!]\label{algo:vem}
\KwData{$\{\obs, \lat, \aff \}$, a multilevel network with an initial clustering of size $(\QR, \QL)$.}
\BlankLine
\textbf{Procedure:}\\

$\bullet$ Set $\{\taur{}, \taul{}\}$ from the initial clustering.\\
\vspace{0.5em}
\While{$\mathcal{I}_{\theta}(\mathcal{R}(\lat | \aff))$ is increasing}{
\begin{description}
    \item[$\bullet$ M step] compute 
    \[\theta^{(t+1)} = \arg \underset{\theta}{\max}\;  \mathcal{I}_{\theta}(\mathcal{R}^{(t+1)}(\ZR{}, \ZL{} | \aff)),\]
    by updating the model parameters as follows:
    \begin{align*}
	\widehat{\pil{l}} & =  \frac{1}{\nbl} \sum_j \widehat{\taul{jl}} & 
	\widehat{\alphal{ll'}} & = \frac{ \sum_{j' \neq j} \widehat{\taul{jl}}\XL{jj'}\widehat{\taul{j'l'}}}{\sum_{j' \neq j} \widehat{\taur{jl}}  \widehat{\taur{j'l'}} } \\
    \widehat{\gamma}_{kl} & = \frac{ \sum_{i,j}  \widehat{\taur{ik}} A_{ij} \widehat{\taul{jl}} }{ \sum_{i, j} A_{ij}\widehat{\taul{jl}} } &
	\widehat{\alphar{kk'}} & = \frac{ \sum_{i' \neq i} \widehat{\taur{ik}} \XR{ii'} \widehat{\taur{i'k'}}}{\sum_{i' \neq i} \widehat{\taur{ik}} \widehat{\taur{i'k'}} }\,.
	\end{align*}
    \item[$\bullet$ VE step] compute 
    \begin{align*} \{\taur{}, \taul{} \}^{(t+1)} & = \arg \underset{\taur{}, \taul{}}{\max}\; \mathcal{I}_{\theta^{(t)}}(\mathcal{R}(\ZR{}, \ZL{} | \aff))
    \end{align*}
    by updating the variational parameters with the following fixed points relationships:
    \begin{align*}
	\widehat{\taul{jl}}  \propto & \pil{l} \prod_{i,k} \gamma_{kl}^{\aff_{il}\widehat{\taur{ik}}}\prod_{j'\neq j}\prod_{l'}\phi(\XL{jj'}, \alphal{ll'})^{\widehat{\taul{j'l'}}} \\
    \widehat{\taur{ik}}  \propto &  \prod_{j,l} \gamma_{kl}^{\aff_{il}\widehat{\taul{jl}}}\prod_{i'\neq i}\prod_{k'}\phi(\XR{ii'}, \alphar{kk'})^{\widehat{\taur{i'k'}}}\,.
\end{align*}
\end{description}
}
 \Return{$\mathcal{I}_{\theta}(\mathcal{R}(\lat | \aff))$, $\widehat{\theta}$ and $\{\widehat{\taur{}}, \widehat{\taul{}}\}$}
 
\caption{Variational EM algorithm}
\end{algorithm}

\subsection{Model selection}
\subsubsection{Selection of the number of blocks}

Following \cite{biernacki2000assessing} and \cite{daudin2008mixture},  we propose  a model selection criterion to choose the unknown  number of blocks $\QR$ and $\QL$. 
 The ICL criterion is an
 integrated  version of BIC applied to the complete likelihood.
 In other words, it is an asymptotic approximation of the complete likelihood integrated over its parameters and latent variables, it values both goodness of fit and classification sharpness \citep{mariadassou2010uncovering}.\\

Our criterion is equal to:

\begin{equation}\label{eq:icl}
\ICL_\MLVSBM(\QR, \QL) = \log \ell_{\widehat{\theta}}(\XR{}, \XL{}, \widehat{\ZR{}}, \widehat{\ZL{}} | A, \QR, \QL) - \pen_\MLVSBM(\QR, \QL),
\end{equation}
where
\begin{align}\label{eq:penalty}
    \pen_\MLVSBM(\QR, \QL)  = & \frac{1}{2}\frac{\QR(\QR+1)}{2}\log\frac{\nbr(\nbr-1)}{2} + \frac{\QL(\QR-1)}{2} \log \nbr + \nonumber \\
    & \frac{1}{2}\frac{\QL(\QL+1)}{2}\log\frac{\nbl(\nbl-1)}{2} + \frac{\QL-1}{2} \log \nbl ,
\end{align}
where $\widehat{\ZL{}}$ and $\widehat{\ZR{}}$ are the imputed latent variables  using the maximum a posteriori (MAP) of $ \mathbb{P}_{\hat{\theta}}(\lat | \obs, \aff ; \QR, \QL)$.  
The calculus  is provided in \ref{appendix:icl}. 
As for the variational inference, $ \mathbb{P}_{\hat{\theta}}(\lat | \obs, \aff ; \QR, \QL)$  is unknown and, in practice, we replace it by its mean-field approximation $\mathcal{R}_{\hat{\theta}}(\lat | \aff; \QR, \QL)$.

\begin{remark}
Once again, note that the penalty \eqref{eq:penalty} is adapted to  undirected networks. For instance, the term $\frac{\QR(\QR+1)}{2}\log\frac{\nbr(\nbr-1)}{2}$ would become $\QR^2 \log \nbr(\nbr-1)$ if $\XR{}$ 
were not symmetric.   
\end{remark}
\begin{remark}
We recall that the penalty of the ICL for a (unilevel) SBM is given by

\begin{align}\label{eq:penaltysbm}
    \pen_{\SBM}(Q)  = & \frac{1}{2}\frac{Q(Q+1)}{2}\log\frac{n(n-1)}{2} + \frac{Q-1}{2} \log n. 
\end{align}
\end{remark}
The penalty term in Equation \eqref{eq:penalty} for the inter-organizational level is the same as the one given in Equation \eqref{eq:penaltysbm}. For the inter-individual network, the factor in front of $\log{\nbr{}}$ is  $\QL(\QR-1)$ instead of $\QR-1$ for the SBM as in Equation \eqref{eq:penaltysbm}, that is the penalty term which corresponds to the degree of freedom of $\gamma$.

\subsubsection{Determining the independence between levels}

The ICL criterion can also be used to assess whether the two levels of interactions are independent or not. If  $\gamma$  is forced to have all its   columns identical, then the penalty term on $\gamma$ becomes $\frac{1}{2}(\QR-1) \log \nbr$ and, as a consequence: 
\begin{equation}\label{eq:ICL ind}
    \ICL_{\Ind}(\QR, \QL) = \ICL^I_{\SBM}(\QR) + \ICL^O_{\SBM}(\QL).
\end{equation}
The ICL criterion favors independence if 
\begin{equation*}
\underset{\{\QR, \QL\}}{\max} \ICL_{{\MLVSBM}} (\QR, \QL) \leq  \underset{\QR}{\max}\; \ICL^I_{\SBM}(\QR) + \underset{\QL}{\max}\; \ICL^O_{\SBM}(\QL).
\end{equation*}
If this is the case, then the gain in terms of likelihood does not compensate the gain $\frac{1}{2}(\QL-1)(\QR-1) \log \nbr$ in the penalty. This criterion focuses on the dependence between levels given by the inter-level. 

\begin{remark}
  If  $\QR=1$ or $\QL=1$, the MLVSBM is the product of two independents SBM, as such $\ICL_{\Ind}(\QR, \QL) = \ICL_{{\MLVSBM}} (\QR, \QL)$.
\end{remark}

\subsubsection{Procedure for model selection}
We now provide a procedure for model selection which seeks for the optimal number of blocks at a reasonable cost. As a by-product, it states whether the two levels are independent or not.\\

The practical choice  of the model and the estimation of its parameters are computationally intensive tasks. Indeed,  we should  compare all the possible models -- one model corresponding to a given $(\QR, \QL)$ -- through the ICL criterion. Furthermore, for each model, the variational EM algorithm  should be  initialized  at a large number of initialization points  (due to its sensitivity to the starting point), resulting in an unreasonable computational cost.  Instead, we propose to adopt a  stepwise strategy, resulting in a faster exploration of the model space,  combined with  efficient initializations of the variational EM algorithm.  The procedure we suggest is  given in Algorithm  \ref{algo:icl}.

\begin{algorithm}[hbt!]\label{algo:icl}
\KwData{$\{\XR{}, \XL{}, \aff\}$, a multilevel network.}
\BlankLine
\textbf{Procedure}:\\

$\bullet$ Infer independent SBMs on $\XR{}$ and $\XL{}$
for a respective range of $\QR$ and $\QL$. Deduce 
\begin{eqnarray*}
 \widehat{\QR}^{\Ind} &=& \argmax_{\QR} {\ICL^I_{\SBM}}(\QR)\quad \mbox{ and }  \quad \widehat{\QL}^{\Ind} = \argmax_{\QL} {\ICL^O_{\SBM}}(\QL).
\end{eqnarray*}
 Compute  $\ICL_{\Ind}  = \ICL^I_{\SBM}(\widehat{\QR}^{\Ind} ) + \ICL^O_{\SBM}(\widehat{\QL}^{\Ind})$.\\
\vspace{0.5em}

$\bullet$ Start at $  \QR = \widehat{\QR}^{\Ind}$ and $\QL = \widehat{\QL}^{\Ind} $.\\
\vspace{0.5em}
\While{ICL is increasing}{
	- Fit an MLVSBM on every model of size $(\QR \pm 1, \QL \pm 1)$ initialized by merging 2 blocks or splitting a block with hierarchical clustering. \\
	- Among all estimated models, keep the one with the highest ICL.
}
 \Return{$(\widehat{\QR}, \widehat{\QL}) = \argmax \ICL(\QR,\QL)$, $\widehat{\theta}_{(\widehat{\QR}, \widehat{\QL})}$ and $\widehat{\lat}$.}

\caption{Model selection algorithm}
\end{algorithm}

Each step of the algorithm requires $O(\max \{\QR, \QL \}^2)$ variational EM algorithms which converge in a few iterations as a result of the local initialization. Inferring an independent SBM on each level beforehand is a fast way to start with good initialization and allows us to state on the independence of the model at the same time as we just need to compare the sum of the $ICL_\Ind$  and $\ICL_\MLVSBM(\widehat{\QR}, \widehat{\QL})$. 

\paragraph{Package}

All the codes are available as an \texttt{R} package  at \href{https://chabert-liddell.github.io/MLVSBM/}{https://chabert-liddell.github.io/MLVSBM/}.
It features the simulation and inference of multilevel networks with  symmetric and/or asymmetric adjacency matrices, model and independence selection. It also handles missing at random data \citep{rubin1976inference} on the adjacency matrices \OA{}{of one or both levels }
and link prediction.

\section{Illustration on simulated data}\label{section:simulation}
In this section, we study the performances of the
inference procedure for the MLVSBM including the ability to recover blocks, the selection of the numbers of blocks and the independence detection.

\begin{remark}
In order to evaluate the ability to recover blocks, we resort to the Adjusted Rand Index (ARI) \citep{hubert1985comparing} which is a comparison index between two clusterings with a correction for chance. This index is close to $0$ when the two clusterings are independent and is $1$ when the clusterings are identical (up to label switching).  
\end{remark}

\begin{remark}
\OA{}{In our results, we focus on the ability to recover blocks rather than on the quality of the model parameter estimates since it is the hardest task.
Indeed, once the blocks are recovered (ARI=1), the estimation of the model parameters boils down to the computation of the proportions of observed links between blocks which is a consistent estimator in a Bernoulli i.i.d. model.}
\end{remark}

\subsection{Experimental design}
In what follows, we set $\QL = \QR = 3$. The networks are of sizes:    $\nbl = 60 $   and $\nbr = 180$. 

Let $d$ be a density parameter: the lower $d$, the sparser the network and the harder the inference.  $\epsilon$ ($\geq 1$) is a parameter tuning the strength of the communities;  when $\epsilon$ is high, the communities are easily separable. 
In the simulation study, we focus on the three following standard topologies. 
\begin{itemize}
    \item  \emph{Assortative communities.} The probability of connection within communities  is higher than the probability of connection between communities:
   $\alphar{} = d*
        \begin{bmatrix}
        \epsilon & 1 & 1 \\
        1 & \epsilon & 1 \\
        1 & 1 & \epsilon
        \end{bmatrix}$. 
    \item  \emph{Disassortative communities.} The probability of connection within communities is lower than the probability of connection between communities:
   $\alphar{} = d*\begin{bmatrix}
        1 & \epsilon & \epsilon \\
        \epsilon & 1 & \epsilon \\
        \epsilon & \epsilon & 1
        \end{bmatrix}$. 
    \item  \emph{Core-periphery.} A core block is highly connected to the whole network while the probability of connection in the periphery is low: 
   $\alphar{} = d*
        \begin{bmatrix}
        \epsilon & \epsilon & 1 \\
        \epsilon & 1 & 1 \\
        1 & 1 & 1
        \end{bmatrix}$. 
\end{itemize}

\noindent We fix the topology of the inter-organizational level $\XL{}$ to be an assortative communities with $d = 0.1$,  $\epsilon = 5$ and of communities of equal size on average. We expect this topology to be easy to infer and to obtain a perfect recovery of the clustering with high probability.

 For the inter-individual level, $d$ is set  to $0.01$, $0.05$ or $0.1$ while $\epsilon$ ranges from 1 to 10 by stepize of 0.5.  $\epsilon = 1$  corresponds to an Erd\H{o}s-Rényi graph and the communities should be indistinguishable.   \\The affiliation matrix $A$ is generated from a power-law distribution in order to get different sizes of organizations. Other distributions were tried but the results (not reported here) show that their impact  on the inference is weak. 
 
Finally, $\delta$ is a parameter for the strength of the dependence between levels,   ranging from $0$ to $1$.
More precisely, we set: 
\begin{equation*}
    \gamma = 
    \begin{bmatrix}
        \delta & \frac{1}{2}(1-\delta) & \frac{1}{2}(1-\delta)\\
        \frac{1}{2}(1-\delta) & \delta & \frac{1}{2}(1-\delta)\\
        \frac{1}{2}(1-\delta) & \frac{1}{2}(1-\delta) & \delta\\
    \end{bmatrix}
\end{equation*}
where $\gamma$ has been defined in Equation \eqref{eq:modZI}.
$\delta = 1/\QR$  corresponds to the case of independence between levels. The further $\delta$ is from $1/\QR$, the stronger the dependence between levels. $\delta = 1$ implies a deterministic link between the clustering of the two levels, ie. the block of an individual is fully determined by the block of his/her organization. 
\OA{}{With this experimental design we aim to exhibit how the inference is improved by applying the MLVSBM rather than the SBM when the two levels are intertwined.}

\subsection{Simulation results}
\OA{}{During the inference procedure, the number of blocks is unknown for both levels.
We run the model selection for $\widehat{\QR}\in\{1,\ldots,10\}$ and $\widehat{\QL}\in\{1,\ldots,10\}$.}

First, we fix $\delta = 0.8$ and make $\epsilon$ vary. Each situation is simulated 50 times.
We  test  the ability of our model to recover the true clustering of $\ZR{}$ from $(\XR{},\XL{})$. We compare our performances to the ones obtained by applying a standard (unilevel) SBM on $\XR{}$.   
Because  $(\QR,\QL)$  are  assumed to be unknown,  two types of error may occur:  one for not selecting the right $\QR$ and one for assigning nodes to the wrong blocks.
The results are displayed in Figure \ref{fig:ari_epsilon}.

 In Figure \ref{fig:ari_epsilon} A, we plot -- for $3$ values of density $d$ and the $3$ topologies (assortative, core-periphery and disassortative) -- the ARI when using MLVSBM (plain line) and SBM (dashed line) as $\varepsilon$ varies.   We observe that, for any topology, the MLVSBM starts to recover perfectly the clustering for  a lower value of $\epsilon$ than the SBM \OA{}{ because in the MLVSBM, the inter-individual level benefits from the information held in the inter-organizational level through the dependence of their blocks}. The difficulty of the inference increases as $\epsilon$ decreases: as can be seen in Figure \ref{fig:ari_epsilon} A, MLVSBM  still performs well ($ARI > 0$) for small values of $\epsilon$ while the  SBM is unable to recover the clustering.

 In Figures \ref{fig:ari_epsilon} B and C,  we plot the number of blocks chosen by the MLVSBM (B) and the SBM (C) for $3$ values of density (rows)  and $3$ topologies (columns) (the true value being $\QR{}  = 3$). 
 We observe that  using the MLVSBM allows to recover more precisely $\QR$ than using the SBM.
$\widehat{\QR}$ varies from $1$, when no structure is detected to $3$ which is the true number of blocks. The procedure never selects more blocks than expected, which   is coherent with prior  knowledge that the  ICL for  the   SBM tends to select models of smaller size   \citep{hayashi2016tractable, brault2014estimation}.\\

\begin{figure}[ht]
    \centering
    \includegraphics[width = \textwidth]{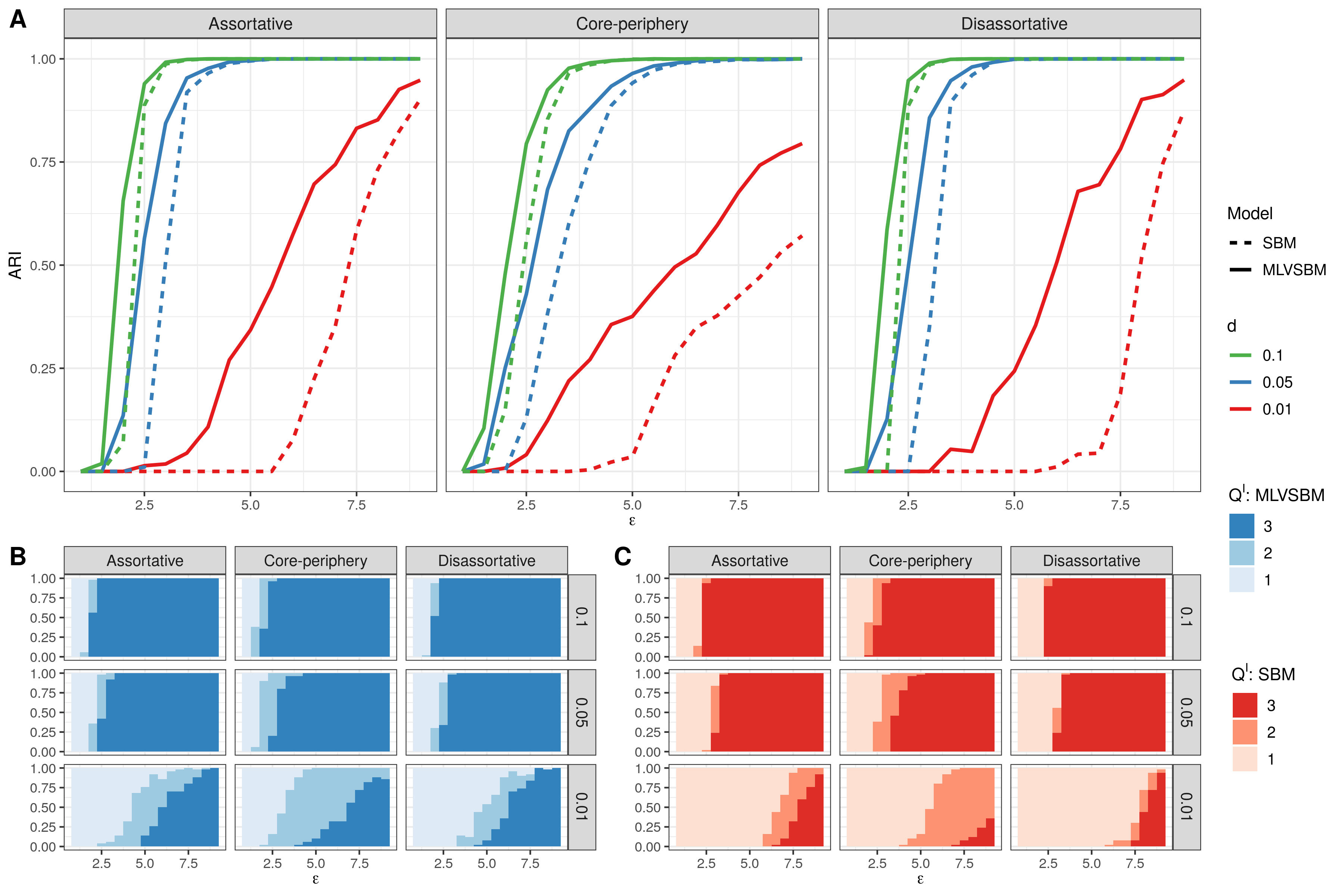}
    \caption{Clustering and model selection for 3 different topologies on the inter-individual level, varying $\epsilon$ and density $d$. Each situation is simulated 50 times. 
    \textbf{A}: ARI for the inter-individual level, comparing the model used for inference. 
    \textbf{B}: Stacked frequency barplot of the selected number of blocks for the inter-individual level in the MLVSBM (in blue).  
    \textbf{C}:  Stacked frequency barplot of the selected number of blocks for the inter-individual level chosen in the SBM (in red).}
    \label{fig:ari_epsilon}
\end{figure}

On the three topologies  with  $\epsilon = 3$,  depending on the density $d$,  MLVSBM and SBM supply $\ZR{}$ either  a perfect recovery of the clustering  or a random clustering or  something in between.  In order to understand better this phenomenon, we fix  $\epsilon$ to 3 and make $\delta$ -- which quantifies the dependency between the two levels -- vary.   
The results are reported in Figure \ref{fig:ari_gamma} for 50 simulations of each situation. 

 When $\delta=1/3$ (yellow vertical line in Figure \ref{fig:ari_gamma} A), the two levels are independent and  the results in terms of clustering  are the same for the MLVSBM and the SBM on $\XR{}$ (see ARI  in  Figure \ref{fig:ari_gamma} A).  As soon as $\delta$ departs from this value, the MLVSBM is  able to recover some of the structure of the inter-individual level thanks to the inter-organizational level and this ability is observed  even for very low density when $\delta$ gets closer to 1 (see Figure \ref{fig:ari_gamma} A and B).
 
 Figure \ref{fig:ari_gamma}.C depicts the performances of the ICL criterion  to state on the independence between the two levels. For $d = 0.01$,  $\XR{}$  is very sparse,  $\widehat{\QR} =  1$ (no structure is detected on the inter-individual level) leading to $\ICL_{ind} = \ICL_{{\MLVSBM}}$ and preventing us from detecting any dependency.  
For higher densities, we see  as expected, that if $\delta  \approx 1/3$,  the independent SBM will be preferred.  On the contrary  
the further $\delta$ departs from $1/3$ the more the MLVSBM will be selected,  even-though the MLVSBM and the independent SBM may provide the same clusterings.  This phenomenon occurs faster for higher density $d$.  
In our simulation, the MLVSBM is never selected when $\delta = 1/3$. This is a consequence of the conservative nature of ICL, requiring strong evidence from the likelihood to select a more complex model. \\

\begin{figure}[ht]
   \centering
   \includegraphics[width = \textwidth]{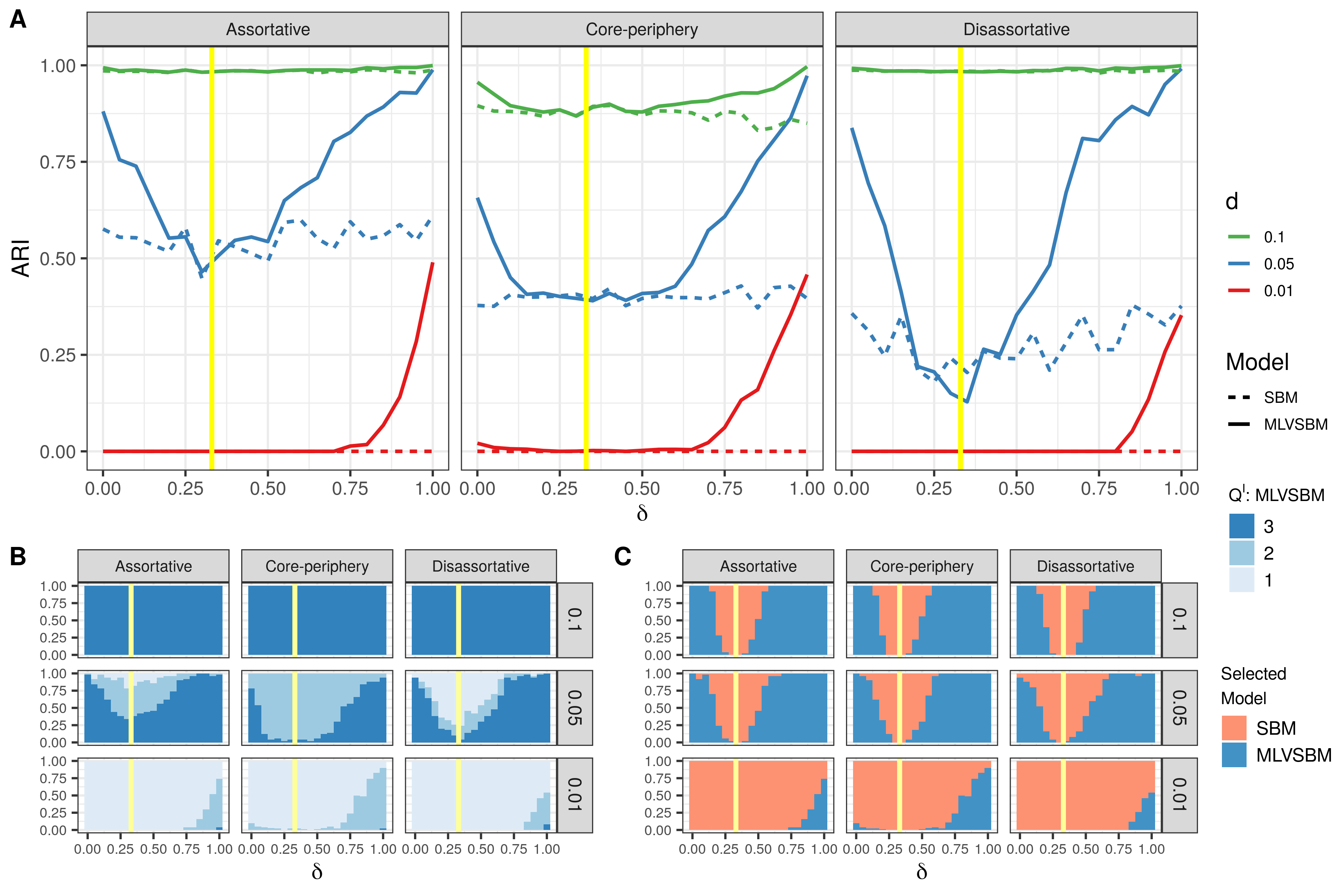}
   \caption{Clustering and model selection for 3 different topologies on the inter-individual level, as function of  $\delta$ and density $d$. Each situation is simulated 50 times. The yellow vertical lines represent a $\delta = 1/3$ (i.e. a $\gamma$ with uniform coefficients, resulting into independence between the two levels).
   \textbf{A}: ARI for the inter-individual level, comparing the model used for inference. 
   \textbf{B}: Stacked frequency barplot of the selected number of blocks for the inter-individual level in the MLVSBM. 
   \textbf{C}: Stacked frequency barplot of the selected model with respect to inter-level dependence.
   }
   \label{fig:ari_gamma}
\end{figure}

\OA{}{We chose not to present results concerning the inter-organizational level since its structure was selected to be ``easy-to-infer''. Hence, the SBM and the MLVSBM  perform well for selecting the true number of blocks $\QL$ and recovering the block structure.}
Simulations gave similar results (not reported here) when we inverse the topologies on $\XR{}$ and $\XL{}$,    showing that information on structure transits in both ways. Moreover, when the number of nodes of the "easy-to-infer" level increases, it facilitates the recovering  of the  clustering on the "hard-to-infer" level. \OA{}{When both levels are "hard-to-infer", the inference of each level benefits from one another if the dependence between the two levels is strong enough. One can exhibit cases where the unilevel SBM is unable to recover the clustering of any of the two levels but where the MLVSBM succeeds in recovering the true blocks for both. Detailed results for such a  simulation study are available on the MLVSBM \texttt{R} package website \url{https://chabert-liddell.github.io/MLVSBM/articles/hard_to_infer.html}.}

\OA{}{\subsection{Computational costs}
Inferring the blocks and the parameters of a multilevel network is a challenging task which can be time consuming. As a guideline for readers, we present in Table \ref{tab:comp_cost} the average computation time using the \texttt{R} package MLVSBM on two cores of a desktop computer with 32GB of RAM and a Intel® Xeon(R) CPU E5-1650 v4 @ 3.60GHz × 12 processor running on Ubuntu 18.04.5 LTS for the inference of simulated networks including model selection for different network sizes and different numbers of blocks.  
} 
  \begin{table}[hbt!]
    \centering
    \OA{}{  
    \begin{tabular}{ccccc}
        \toprule
        \multicolumn{2}{c}{Network Size} &
        \multicolumn{3}{c}{Running time (mean $\pm$ sd) in seconds}\\
        \cmidrule(lr){1-2}\cmidrule(lr){3-5}
        $\nbr$ & $\nbl$ & $\QR = \QL = 2$ &  $\QR = \QL = 4$ &  $\QR = \QL = 8$ \\
        \cmidrule(lr){1-2}\cmidrule(lr){3-5}
        150 & 50 & $9.87 \pm  4.13$  & -  & - \\
        600 & 200 & $443 \pm 205$   & $1794 \pm 1287$ & -   \\
        1500 & 500 & $1093 \pm 900$   & $2583 \pm 1226$  & $7050 \pm 2670$   \\
        \bottomrule
    \end{tabular}
        \caption{Average running time for the inference of the MLVSBM for different network sizes and different numbers of blocks.}
   \label{tab:comp_cost} }
\end{table}

\section{Application to the multilevel network issued from a television programs trade fair}\label{section:application}

We apply our model to the data set \citep{brailly2016embeddedness} described below.

\subsection{Context and Description of the data set}

Promoshow East is   a television programs trade fair for Eastern Europe. Sellers from Western Europe and the USA come to sell audiovisual products to regional and local buyers such as broadcasting companies.  The data gather observations on one particular  audiovisual product, namely animation and cartoons. 
From a sociological perspective, reconstituting and analyzing multilevel (inter-individual and inter-organizational) networks in this industry is important. In economic sociology, it helps redefine the nature of markets \citep{brailly2016embeddedness, brailly2017neo, lazega2002interdependent}. In the sociology of culture, it helps understand, from a structural perspective, the mechanisms underlying contemporary globalization and 
standardization of culture \citep{brailly2016embeddedness, favre2016inter}. In the sociology of organizations and collective action, it helps understand the importance of multilevel relational infrastructures for the management of tense competition and cooperation dilemmas by various categories of actors \citep{lazega2020bureaucracy}, in this case the (sophisticated) sales representatives of cultural industries.

The data were collected by face-to-face interviews. At the individual level, people were asked to select from a list the individuals from which they obtain advice or information during or before the trade fair. The level consists of $128$ individuals and $710$ directed interactions (density = $0.044$). The individuals were affiliated to $109$ organizations,  each one containing from one to  six individuals. 
At the inter-organizational level, two   kinds of interactions were collected:  a deal network (deals signed  since the last trade fair) and a meeting network (derived from the aggregation at the inter-organizational level of the meetings planned by individuals on the trade fair's website). 
Both networks are symmetric with respective densities $0.067$ and $0.059$. 

\subsection{Statistical analysis}

The MLVSBM is inferred on the two datasets (one dataset corresponding to the deal network at the inter-organizational level, the other dataset to the meeting network at the inter-organizational level).
In both cases the ICL criterion favors dependence between the two levels and chooses  $\widehat{\QR} = 4$ blocks of individuals. $\widehat{\QL} $ is equal to $3$ for the deal network  and $4$ for the meeting network.

In order to determine which is the most relevant inter-organizational network, we test the ability of the MLVSBM to predict dyads or links in the inter-individual network when the deal or the meeting networks are considered. To do so, we choose uniformly dyads and links to remove and try to predict them. 
More precisely,  we  set $\XR{ii'} = \texttt{NA}$ for a certain percentage of $(i,i')$ (this percentage ranging from $5\%$ to $40 \%$ by step-size of $5\%$).  We also propose to remove existing links (ie.  forcing $\XR{ii'} = 0$ when  $\XR{ii'} = 1$ was observed, for some randomly chosen $(i,i')$). The percentage of removed existing links varies from   $5\%$ to  $95\%$ (with step-size of $5\%$). We repeat the following procedure $100$ times:
\begin{enumerate}
    \item Remove dyads or links uniformly at random
    \item Infer the newly obtained network from scratch in order to obtain the probability of a link $ \mathbb{P} (\XR{ii'} = 1; \widehat{\theta})$ for each missing dyad or for each dyad such that $\XR{ii'} = 0$
    \item Predict link among all missing dyads or among all dyads such that $\XR{ii'} = 0$.
\end{enumerate}
 Missing data are handled as Missing At Random \citep{tabouy2019variational} and the probability of existence of an  edge is given by:
$
    \mathbb{P}(\XR{ii'}= 1; \widehat{\theta}) = \sum_{k,k'} \widehat{\taur{ik}} \widehat{\alphar{kl}} \widehat{\taur{i'k'}}
$. Since the result of our procedure is equivalent to a binary classification problem, we assess the performance through the area under the ROC curve (AUC) (a random classification corresponding to $\AUC = 0.5$). \\

Figure \ref{fig:predict_orga} shows that using the MLVSBM compared to a single level SBM improves a lot the recovery of the inter-individual level for this dataset. This confirms the dependence between levels detected by the ICL. Moreover,  using the deal network gives better predictions for both missing dyads and missing links than the meeting network.  We also considered a merged network at the inter-organizational level by making the union of links of the deal and the meeting network\OA{}{, i.e. for all $j, j' \in \nbl$, $X^{O, \text{merged}}_{jj'} = \max \{X^{O, \text{deal}}_{jj'}, X^{O, \text{meeting}}_{jj'}\} $ }. The improvement in terms of prediction over the deal network is not very significant and this composite network is much harder to analyze sociologically.

\begin{figure}[ht]
    \centering
    \includegraphics[width = \textwidth]{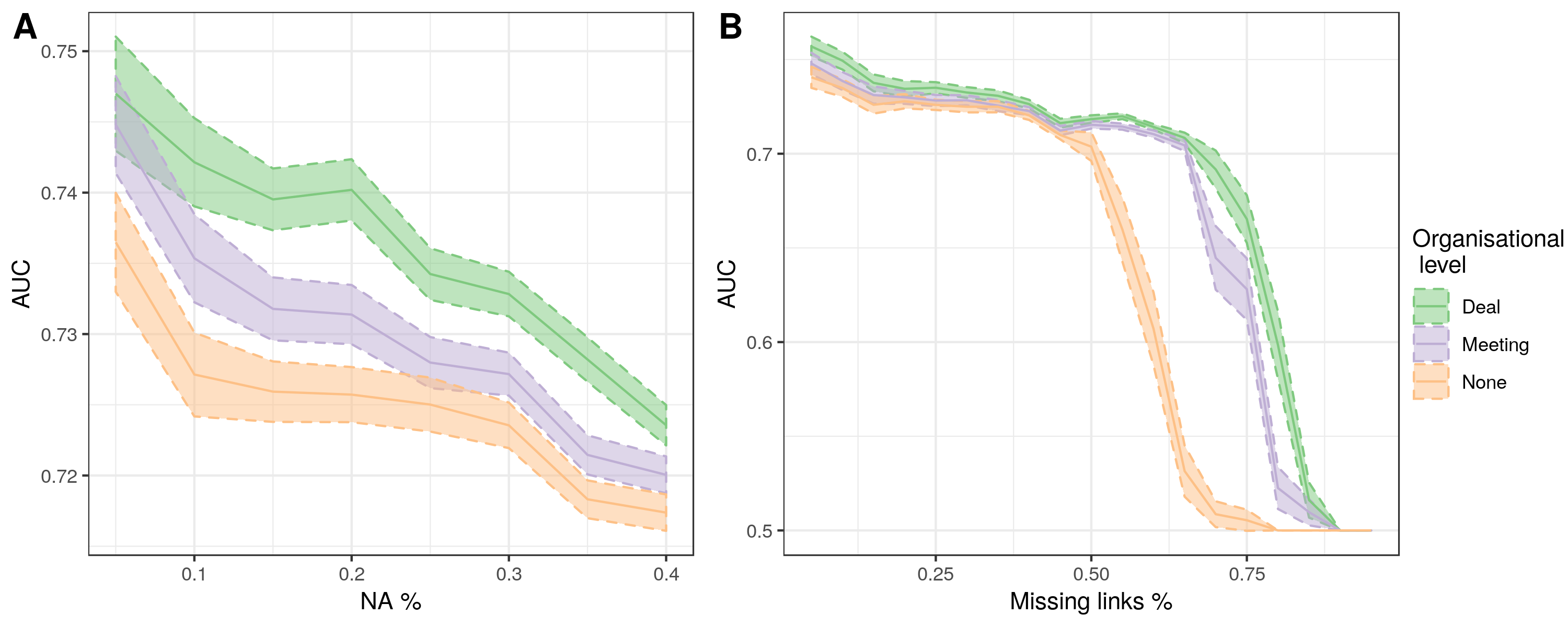}
    \caption{AUC of the prediction for  \textbf{A}: missing dyads, \textbf{B}: missing links, in function of the missing proportion for the inter-individual level. Colors represent different network at the inter-organizational level. None (beige) is equivalent to a single layer SBM on the individuals. The confidence interval is given by $mean \pm std error$.}
    \label{fig:predict_orga}
\end{figure}

\begin{remark}
  \OA{}{Another way to simulate missing data is to consider actor non-response like in  \citep{vznidarvsivc2012non}. In our case, it  corresponds to selecting a portion of the individuals at random and putting all their out-going dyads  to $\texttt{NA}$ (i.e. $\XR{ii'} = \texttt{NA}$ for all $i'$ if individual $i$ did not respond). Then we look at the stability of the clustering as in \cite{vznidarvsivc2012non, vznidarvsivc2019treating} (the ARI between the clustering of the individuals with the full data and the one with the missing data). By doing so, we notice in simulations (not reported here) that the clustering of the individuals is more stable when considering an MLVSBM on $(\XR{},X^{O,\text{deal}})$
  than when considering a unilevel SBM on $\XR{}$.
  This is one more clue in favor of the dependence between the two levels. }  
  \end{remark}

\begin{remark}  \OA{}{
\citet{vziberna2019blockmodeling} and \citet{vziberna2020k} also deals with this dataset from \citet{brailly2016embeddedness}. However, \cite{vziberna2019blockmodeling} uses the dataset collected in 2012 and 
\citet{vziberna2020k} gathers the datasets collected in 2011 and 2012 while we only use the 2011 dataset.
Moreover, different choices were made on the individuals and organizations to include  or not.
Thus, a direct comparison does not make sense. 
Applying \v{Z}iberna's method on the dataset we consider provides us with clusterings that somewhat agree on both levels (ARIs>0.6). We have checked that the difference derives from the fact that the two methods do not seek the same patterns.}
\end{remark}

\begin{sidewaysfigure}
    \centering
    \includegraphics[width = \textwidth]{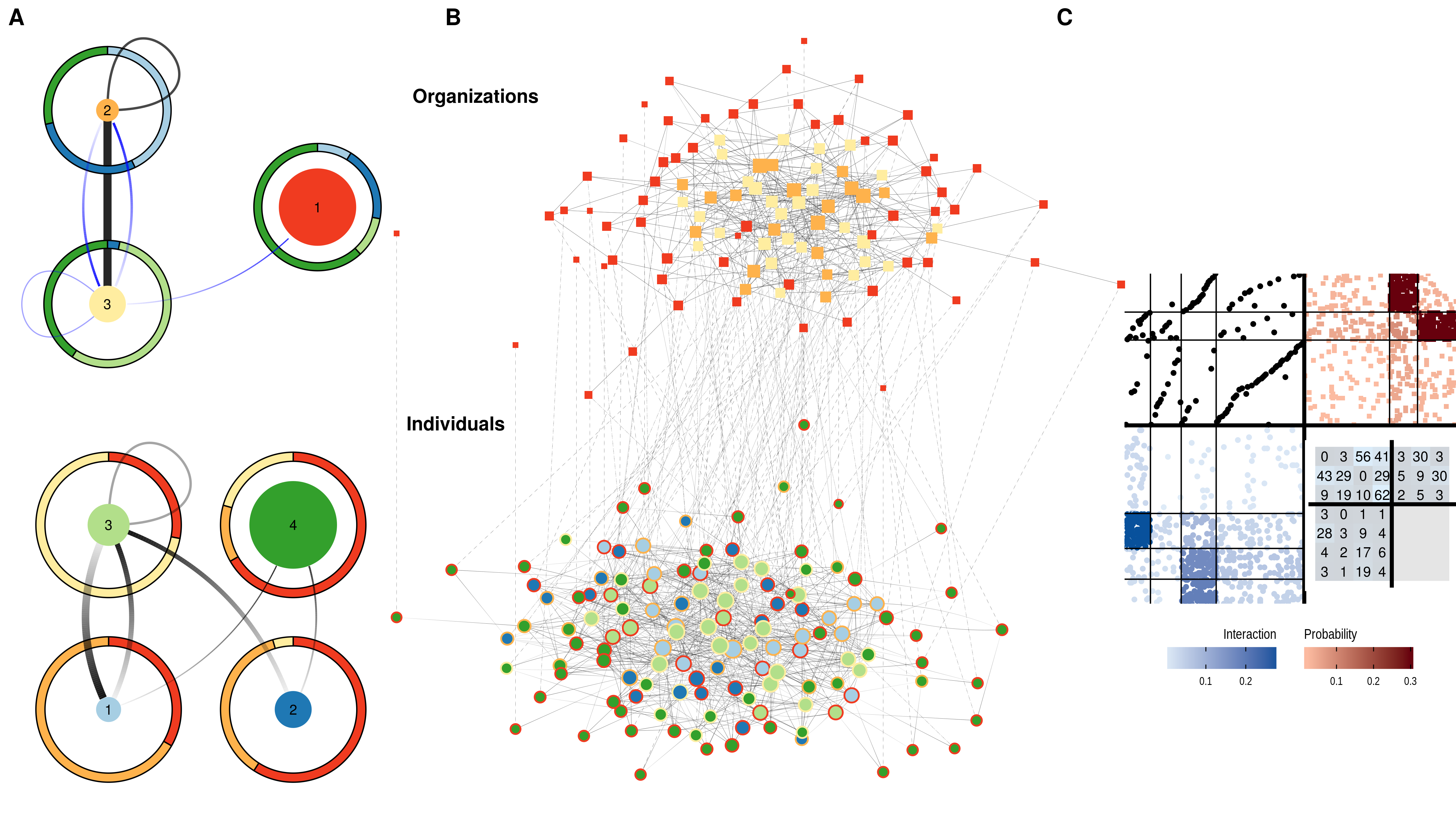}
    \caption{\OA{}{Multilevel network of the Promoshow East trade fair 2011. Above: the deal network for the organizations and below: the advice network for the individuals. 
    \textbf{A}: Mesoscopic view of the multilevel network. Nodes stand for the blocks, donut charts show the relation between $\ZL{}$ and $\ZR{}$. Black edges are the
    probabilities of connection $\alphar{}$ and $\alphal{}$, blue edges stand for  $\mathbb{P}(\XR{ii'} = 1 | \ZL{\aff_i} ,\ZL{\aff_{i'}})$, i.e. the probability of interaction between organizations through their individuals. For sake of clarity only edges with probabilities above the density are shown.
    \textbf{B}: View of the network. The size of a node is proportional to its in-degree. Colors represent the clustering obtained with the MLVSBM.
    \textbf{C}: Matrix representation of the multilevel network. At the bottom-left, the adjacency matrix of the advice network between individuals, at the top-right, the deal network between organizations, at the top-left, the affiliation matrix of the individuals to the organizations. Entries are reordered by block from left to right and bottom to top. Blocks are separated by thin lines and levels by thick lines. The entries of the bottom-right matrix are the parameters $\alphar{}$, $\gamma$ and $\alphal{}$ multiplied by $100$.}}
    \label{fig:deal_network}
\end{sidewaysfigure}

\subsection{Analysis and comments}
For the analysis, we use the  MLVSBM inferred from  the deal network.  We select $\widehat{\QL} = 3$  and  $\widehat{\QR} = 4$ blocks  and  the ICL is in favor of a dependence between the two levels.  
This network is plotted in Figure \ref{fig:deal_network} B and  we reordered the adjacency matrices of both levels by blocks in Figure \ref{fig:deal_network} C. 
 In Figure \ref{fig:deal_network}  A, we plot a synthetic view of the blocks of this multilevel network. The size of each node is proportional to the cardinal of each block. For the inter-organizational level, we link blocks of organizations by $\alphal{}$ (plain black edges) and by the probability of interactions of their individuals $\mathbb{P}(\XR{ii'} = 1 | \ZL{\aff_i} ,\ZL{\aff_{i'}})$ (gradual blue edges). The donut charts around the nodes is the parameter $\gamma$. For the inter-individual level, blocks of individuals are linked by $\alphar{}$ and the donut chart for a given block is the apportionment of each block of organizations in the individuals' affiliation. \\
 
We can now interpret the block with respect to the actors' covariates shown in Table \ref{tab:contingency}. At the inter-organizational level, block 1 (in red) is a residual group composed of $61$ organizations that are weakly connected to the rest of the organizations. Block 2 (in orange) consists of customers: broadcasters that come to the trade fair to buy programs and independent buyers who buy programs, planning to sell them later to broadcasters. We observe a non-null intra-block connection, but deals are mainly done between organizations of the blocks 2 and 3 (block 3 in yellow), the latter  mostly containing  distributors.

 At the inter-individual level,  blocks $1$ and $2$ consist of buyers (exclusively for block 1).  They differ in their affiliations, both are affiliated to   the second block of organizations but a larger proportion of the individuals of block 2 are affiliated to the residual block of organizations. They also differ  in the way they connect to blocks 3 and 4. Block 4 is a residual group consisting of roughly half of the individuals. It does not exhibit any particular pattern in its affiliations and is weakly connected, mainly inward connection from block 2. Block 3 consists of sellers  giving advices to individuals of block 2 and has reciprocal relationship with individuals of block 1. They are mainly affiliated to producing and distributing companies of block 3 of organizations. It is also the block that has the strongest intra-block connections. 

 The blue edges in Figure \ref{fig:deal_network}  A show that the organizations of blocks 2 and 3 and their respective individuals follow the same pattern for their inter-block connections but differ in their intra-block connections. Individuals affiliated to organizations of block 3 have above average intra-block connections while few contracts are signed between their organizations (mainly distributors).\\

\begin{table}[hbt!]
    \centering
    \begin{tabular}{c}
    \begin{tabular}{ccccccc}
        \toprule
        \multicolumn{2}{c}{Organizations} &
        \multicolumn{5}{c}{Covariates}\\
        \cmidrule(lr){1-2}\cmidrule(lr){3-7}
        Block & Size & Producer & Distributor & \parbox[t]{1cm}{Media\\ group} & \parbox[t]{1.9cm}{Independent buyer} & Broadcaster \\
        \midrule
        1 & 61 & 14 & 16 & 9 & 14 & 8 \\
        2 & 20 & 1  & 0  & 2 & 7  & 10\\
        3 & 28 & 3  & 19 & 5 & 1  & 0 \\
        \bottomrule
    \end{tabular}\\
     \newline
     \vspace*{1 cm}
     \newline
    \begin{tabular}{ccccccc}
        & & & & & & \\
        \toprule
        \multicolumn{2}{l}{Individuals} & \multicolumn{2}{c}{Covariates} & \multicolumn{3}{c}{Affiliation}\\
        \cmidrule(lr){1-2}\cmidrule(lr){3-4}\cmidrule(lr){5-7}
        Block & Size & Buyer & Seller & 1 & 2 & 3\\
        \midrule
        1 & 18 & 18 & 0  & 6  & 12 & 0 \\
        2 & 22 & 16 & 6  & 13 & 8  & 1 \\
        3 & 25 & 2  & 23 & 7  & 0  & 18\\
        4 & 63 & 15 & 48 & 42 & 8  & 13\\
        \bottomrule
    \end{tabular}
    \end{tabular}
        \caption{Contingency table of covariates
    and clustering for the organizations (top) and the individuals (bottom)}
    \label{tab:contingency}
\end{table}

These results confirm neo-structural insights into the functioning of markets. Competition between producers/distributors is strong: they all need to find broadcasting companies and distributors on the buying side. However, most of them arrive to the trade fair without updated information about the products in which buyers are interested in that year, their available budgets for each category of product, their willingness to negotiate, etc. The value of multilevel network analysis that is used here is to show that inter-individual personal relationships between individuals affiliated with competing organizations help manage the tensions between these directly competing organizations \citep{lazega2016effects, lazega2009theory}. This is where personal ties between individuals affiliated in these companies  -- especially among sellers and buyers, but also less visibly among sellers --  are important: they help manage the strong tensions between companies by creating \textit{coopetition}, i.e. cooperation among their competing firms. Here,  social/advice ties between buyers (blocks 1 and 2 of individuals) affiliated to buying companies in block 2 of organizations (broadcasting companies and distributors) exchange advice from sellers of block 3  representing production and distribution companies: this is the normal, stabilized, overlapping, commercial ties between companies embedded in social ties between representatives. 

 As seen above, block 3 has strong intra-block connections which may signal discreet coordination efforts between sellers as shown by \cite{brailly2016dynamics, brailly2016embeddedness}.  When a seller has closed a deal with a buyer, he/she can advise and update another seller – i.e. a coopetitor in terms of affiliation to a competing company – about other products in which this buyer is interested, what budget is left in his/her pocket, i.e. precious information for the next sellers. This kind of personal service is expected to be reciprocated over the years; otherwise the relationship decays.  This is the most unexpected phenomenon from an orthodox economic perspective and should lead to new perspectives in neo-structural economic sociology \citep{lazega2002interdependent}.

 This cross-level interdependence between inter-organizational ties and inter-individual ties is strong enough for companies to be unable to lay off its sales representatives. Having long tried to replace costly trade fairs with online websites and catalogues, companies realized that they still need the service that real persons and their personal relational capital provide in terms of multilevel management of coopetition \citep{lazega2020bureaucracy}.

\section{Discussion}\label{section:discussion}

In this paper, we propose an SBM for multilevel networks. We develop variational methods for the inference of the model and a criterion  that allows us to choose the number of blocks and to state on the independence between the levels at the same time. There are clear advantages at considering a joint modeling of the two levels over an independent model for  each level. Indeed, we show on some simulation studies that when we detect dependence between levels, it helps us to recover the block structure of a level with low signal thanks to the structure of the other level and also to improve the prediction of missing links or dyads. 
On the trade fair dataset, this joint modeling brought us a synthetic representation of the two networks unraveling their intertwined structure and provide new insights on the social organization. 

In lieu of a Bernoulli distribution, the edge distribution of any level may be extended to a valued distribution and/or to include edge covariates in a similar way as for the SBM \citep{mariadassou2010uncovering}. One way to account for the degree distribution would be to use nodes degrees as covariates, another would be to rewrite the edge distribution as the Degree Corrected SBM \citep{karrer2011stochastic}. 
 Our choice to model the interaction levels given the affiliations  ($\aff$ being fixed) is driven by the fact that, in a lot of applications, these affiliations are known and the object of the analysis is the interactions. We choose to consider a unique affiliation per individual  since this was the case on the datasets available to us, but this approach could be extended to a less restricted number of affiliations (this model is implemented in our \textsc{R} package).  We could even consider any hierarchical structure such as multi-scale networks to model the levels given the hierarchy or more generally multilayer networks by modeling the layers given the inter-layers.

Furthermore, our model is able to decide about the  independence of the structure of connections of the two levels. This is done by a model selection criterion. It would be interesting to test (in a statistical meaning)  this independence but we know that the  variance of our estimators is underestimated because of  
the variational approach (see \cite{blei2017variational} for a review). Besides, sociological studies stated that some individuals benefit  more than others from their organization's interactions \citep{lazega2015multilevel}, which could lead us to consider more local independence between levels. 

 For multiplex networks, \cite{de2017community} use dyad predictions as a way to define interdependence between layers while  \cite{stanley2016clustering} make a clustering of layer by aggregating the most similar. Our work considers multilevel networks where each level has nodes of different natures and Figure \ref{fig:predict_orga} shows that the dependence between levels leads to  a better recovery of missing information. This can be used to help data collection or to correct spurious information on existing data as suggested in \cite{clauset2008hierarchical} or \cite{guimera2009missing}. Indeed, one might imagine  that the data of one level  may  be easier to collect or to verify than the other one (for instance because it is public,  already exists or is   cheaper to collect).   \OA{}{Thus, we think that  this approach could be used to leverage the interdependence in a multilevel network in order  to compensate for some missing or spurious information on a given level which is known to be difficult to observe.}

\section*{Acknowledgements}
The authors would like to thank Julien Brailly for providing the dataset. 
This work was supported by a public grant as part of the Investissement d'avenir project, reference ANR-11-LABX-0056-LMH, LabEx LMH. This work was partially supported by the grant ANR-18-CE02-0010-01 of the French National Research Agency ANR (project EcoNet). This project received financial support from INRAE and CIRAD as part of the SEARS project funded by the GloFoods metaprogram. This work was presented and discussed within the framework of working days organized by the MIRES group (with the financial support of INRAE) and the GDR RESODIV (with the financial support of CNRS).

\bibliographystyle{chicago}
\bibliography{scholar}

\appendix

\renewcommand{\theequation}{\thesection.\arabic{equation}}

\section{Proof of Proposition \ref{prop:independence}} \label{appendix:independence}
\setcounter{proposition}{0}

\begin{proposition}
In the MLVSBM, the two following properties are equivalent: 
\begin{enumerate}
  \item $\ZR{}$ is independent on  $\ZL{}$, 
  \item $\gamma_{kl} = \gamma_{kl'} \quad \forall l, l' \in \{1, \dots, \QL\}$,
\end{enumerate}
and imply that:  
\begin{enumerate}
  \item[3.] $\XR{}$ and $\XL{}$ are independent.
\end{enumerate}
\end{proposition}

\begin{proof}
We first derive an expression for $\ell_{\gamma}(\ZR{}) =  \ell_{\gamma}(\ZR{}|\aff)$:
\begin{align*}
    \ell_{\gamma}(\ZR{}|\aff) & = \int_{\ZL{}}\ell_{\gamma}(\ZR{} | \aff, \ZL{}) d\mathbb{P}(\ZL{})\\
    &=\sum_{l_1, \dots, l_{\nbl}}\ell_{\gamma}(\ZR{} | \aff, \ZL{1} = l_1, \dots, \ZL{\nbl} = l_{\nbl}) \mathbb{P}(\ZL{1} = l_1, \dots, \ZL{\nbl} = l_{\nbl}) \\ 
    & = \sum_{l_1, \dots, l_{\nbl}}\prod_{j}\left(\prod_{i} \ell_{\gamma}(\ZR{i} | \aff, \ZL{\aff_i} = l_{\aff_i})\right) \mathbb{P}(\ZL{j} = l_j) \\
    & = \sum_{l_1, \dots, l_{\nbl}}\prod_{j}\left(\prod_{i, k} \gamma_{kl_j}^{\mathds{1}_{\ZR{i} = k}\aff_{ij}}\right) \pil{l_j} = \prod_{j} \sum_{l} \prod_{i,k}\gamma_{kl}^{\aff_{ij}\mathds{1}_{\ZR{i} = k}} \pil{l} 
\end{align*}
where $A_i = \{j : \aff_{ij} = 1\}$. 

$2. \Rightarrow 1.$: Assume that  $\gamma_{kl} = \gamma_{kl'} \quad \forall l, l' \in \{1, \dots, \QL\}$, then:
\begin{align*}
    \ell_{\gamma{}}(\ZR{} | \ZL{}, \aff) & = \prod_{k, l} \gamma_{kl}^{\sum_{i,j} \aff_{ij} \mathds{1}_{\ZR{i} = k} \mathds{1}_{\ZL{j} = l}}  = \prod_{k} \gamma_{k1}^{\sum_{i,j} \aff_{ij} \mathds{1}_{\ZR{i} = k} \sum_l \mathds{1}_{\ZL{j} = l}} \\& = \prod_{i, k} \gamma_{k1}^{ \mathds{1}_{\ZR{i} = k}}, 
\end{align*}
and 
\begin{align*}
    \ell_{\gamma}(\ZR{}|\aff) & =  \prod_{j} \sum_{l} \prod_{i,k}\gamma_{kl}^{\aff_{ij}\mathds{1}_{\ZR{i} = k}} \pil{l} \\
    & = \prod_{j} \prod_{i,k}\gamma_{k1}^{\aff_{ij}\mathds{1}_{\ZR{i} = k}} \sum_{l}\pil{l}  =  \prod_{i,k}\gamma_{k1}^{\mathds{1}_{\ZR{i} = k}},
\end{align*}
hence $\ell_{\gamma{}}(\ZR{} | \ZL{}, \aff) = \ell_{\gamma{}}(\ZR{}|\aff)$.

$1. \Rightarrow 2.$: 
Assume that $\ell_{\gamma{}}(\ZR{} | \ZL{}, \aff) = \ell_{\gamma{}}(\ZR{}|\aff)$ for any values of $\ZR{}, \ZL{}$, then 
in particular
$\ell_{\gamma{}}(\ZR{1} | \ZL{}, \aff) = \ell_{\gamma{}}(\ZR{1}|\aff)$. Assuming that individual $1$ belongs to organization $j$, we can write, for any $k$: 
$$\pr{\ZR{1} = k | \ZL{j},A_{ij}=1} = \gamma_{k\ZL{j}}.$$
However, this quantity does not depend on $\ZL{j}$ so $ \gamma_{k\ZL{j}} = \gamma_{k \cdot}$  for any value of $k$ and $\ZL{j}$. And so we have   $ \gamma_{k \ell} =  \gamma_{k\ell'}$ for any $(\ell,\ell')$. \\

$1. \Rightarrow 3.$: 
\begin{align*}
    \ell_{\alphar{},\alphal{}}&(\XR{}, \XL{} | \aff)  =
    \int_{z^I, z^O} \ell_{\alphar{},\alphal{}}(\XR{}, \XL{} | \aff, \ZR{} = z^I, \ZL{} = z^O) \mathbb{P}(\ZR{} = z^I, \ZL{} = z^O) \text{d}z^I\text{d}z^O\\
    & = \int_{z^I, z^O} \ell_{\alphar{}}(\XR{}| \ZR{} = z^I) \mathbb{P}(\ZR{} = z^I | A,  \ZL{} = z^O)  \ell_{\alphal{}}(\XL{}| \ZL{} = z^O)\mathbb{P}(\ZL{} = z^O)\text{d}z^I\text{d}z^O\\
    & = \int_{z^I} \ell_{\alphar{}}(\XR{}| \ZR{} = z^I) \mathbb{P}(\ZR{} = z^I)\text{d}z^I \int_{z^O} \ell_{\alphal{}}(\XL{}| \ZL{} = z^O)\mathbb{P}(\ZL{} = z^O)\text{d}z^O\\
    & = \ell_{\alphar{}}(\XR{})\ell_{\alphal{}}(\XL{})
\end{align*}
which is the definition of  the independence.  
\end{proof}

\section{Proof of Proposition \ref{prop:identif}}\label{appendix:identifiability}

\begin{proposition}
The stochastic block model for multilevel networks is identifiable up to label switching under the following assumptions:
\begin{enumerate}
\item[$\mathcal{A}$1.] All coefficients of  $\alphar{}\cdot\gamma\cdot\pil{}$ are distinct and all coefficients of $\alphal{}\cdot\pil{}$ are distinct.
\item[$\mathcal{A}$2.] $\nbr \geq 2\QR$ and $\nbl \geq \max(2\QL, \QL+\QR-1)$.
\item[$\mathcal{A}$3.] At least $2\QR$ organizations contain one individual or more.
\end{enumerate}
\end{proposition}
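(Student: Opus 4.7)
My approach would be a staged identification that reduces the problem to the identifiability of the classical SBM established in \cite{celisse2012consistency}, which itself rests on the Allman--Matias--Rhodes theorem for finite mixtures of Bernoulli products. I would peel off the parameters level by level: first $(\pi^O,\alpha^O)$ from the marginal of $\XL{}$; then the ``product'' parameters $(\gamma\pi^O,\alpha^I)$ from a well-chosen subgraph of $\XR{}$; finally, the matrix $\gamma$ itself by exploiting the coupling between individuals that share an organization.

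First I would remark that the marginal law of $\XL{}$ is exactly a classical SBM on $\nbl$ nodes with parameters $(\pi^O,\alpha^O)$. Under distinctness of the coordinates of $\alphal{}\pil{}$ (from $\mathcal{A}1$) together with $\nbl\geq 2\QL$ (from $\mathcal{A}2$), \cite{celisse2012consistency} yields identifiability of $(\pi^O,\alpha^O)$ up to permutation of the $\QL$ organizational blocks. Second, using $\mathcal{A}3$, I would extract $2\QR$ individuals affiliated to pairwise distinct organizations; for this subset the $\ZL{}$'s of the corresponding organizations are independent, hence the selected $\ZR{}$'s are i.i.d.\ with common law $\gamma\pi^O$, and the induced subnetwork of $\XR{}$ is a classical SBM with prior $\gamma\pi^O$ and connectivity $\alpha^I$. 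Distinctness of $\alphar{}\gamma\pil{}$ (from $\mathcal{A}1$) and $\nbr\geq 2\QR$ (from $\mathcal{A}2$) then identify $\gamma\pi^O$ and $\alpha^I$, up to permutation of the $\QR$ individual blocks, by the same theorem.

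The hard part will be the third step: extracting $\gamma$ from the already-identified product $\gamma\pi^O$, since the marginal of $\XR{}$ only sees $\gamma$ through that product. For this I would use the correlation induced by shared organizations. For $i\neq i'$ affiliated to the same organization $j$, the joint prior satisfies
\[ \mathbb{P}(\ZR{i}=k,\,\ZR{i'}=k') \;=\; \sum_{l=1}^{\QL}\pil{l}\,\gamma_{kl}\,\gamma_{k'l} \;=\;(\gamma\,\mathrm{diag}(\pi^O)\,\gamma^T)_{kk'}, \]
a quantity that depends on the full matrix $\gamma$ and is accessible through second-order statistics of $\XR{}$ restricted to co-affiliated pairs — typically $\mathbb{E}[\XR{ii''}\XR{i'i''}]$ with $i''$ taken in a different organization, which is bilinear in these joint priors and in the already-identified $\alpha^I$ and $\gamma\pi^O$. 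This gives access to the symmetric matrix $M := \gamma\,\mathrm{diag}(\pi^O)\,\gamma^T$ up to the individual-block permutation fixed previously. Combined with the known $\pi^O$, the column-stochastic constraints on $\gamma$, and the inequality $\nbl\geq\QL+\QR-1$ of $\mathcal{A}2$ — which I expect to enter precisely here, to guarantee that enough distinct organizations are available for the resulting linear system on $\gamma$ to have full column rank $\QL$ — this pins down $\gamma$ uniquely up to permutation of its $\QL$ columns. Combining the three steps, the only residual ambiguity is the pair of independent block permutations, which is exactly label switching.
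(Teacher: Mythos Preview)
Your first two stages match the paper's route: it too identifies $(\pi^O,\alpha^O)$ from the marginal of $\XL{}$ via the moment/Vandermonde construction of \cite{celisse2012consistency}, and then $(\gamma\pi^O,\alpha^I)$ from the $\XR{}$-subgraph on $2\QR$ individuals lying in pairwise distinct organizations (this is precisely where $\mathcal{A}3$ is spent).

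The gap is your third stage. The symmetric $\QR\times\QR$ matrix $M=\gamma\,\mathrm{diag}(\pi^O)\,\gamma^{T}$, even combined with the known vector $\gamma\pi^O$ and $\pi^O$, does not determine the $\QR\times\QL$ matrix $\gamma$ once $\QL>\QR$. Take $\QR=2$ and write $\gamma=\begin{pmatrix}a_1&\cdots&a_{\QL}\\1-a_1&\cdots&1-a_{\QL}\end{pmatrix}$; then $M_{12}=(\gamma\pi^O)_1-M_{11}$ and $M_{22}=1-2(\gamma\pi^O)_1+M_{11}$, so $M$ carries a single new scalar $\sum_l\pi^O_l a_l^{2}$ beyond $\gamma\pi^O$ --- not enough to pin down $\QL\geq 3$ unknowns $a_l$. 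Two further problems: your scheme requires an organization containing at least two individuals, which $\mathcal{A}1$--$\mathcal{A}3$ do not guarantee (every individual may sit in its own organization); and the condition $\nbl\geq\QL+\QR-1$ cannot play the role you assign it, since your $M$ is a function of parameters only and no ``linear system in $\gamma$'' with $\nbl$-dependent rank ever arises in your argument.

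The paper's resolution is to use \emph{cross-level} moments rather than intra-$\XR{}$ correlations. With individual $1$ affiliated to organization $1$, it forms $U^{IO}_{ij}=\mathbb{P}\bigl(\XR{1,2:i}=1,\ \XL{1,(i+1):(i+j-1)}=1\mid\aff\bigr)$ for $1\leq i\leq\QR$, $1\leq j\leq\QL$, and shows that $U^{IO}=R^{I}\,\gamma\,\mathrm{diag}(\pi^O)\,(R^{O})'$, where $R^{I},R^{O}$ are the Vandermonde matrices already identified in stages one and two; inverting gives $\gamma$. This is exactly where $\nbl\geq\QL+\QR-1$ enters: one needs organizations $2,\ldots,\QR+\QL-1$ to define the full matrix $U^{IO}$. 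The conceptual point you are missing is that only the \emph{joint} law of $(\XR{},\XL{})$ gives direct access to $\gamma$; any statistic built from $\XR{}$ alone --- co-affiliated or not --- sees $\gamma$ only through moments $\sum_l\pi^O_l\prod_r\gamma_{k_r l}$, and under the stated assumptions those are insufficient.
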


\begin{proof}

Let $\theta = \{ \pil{}, \gamma, \alphar{}, \alphal{} \}$ be the set of parameters and $\mathbb{P}_{X}$ the distribution of the observed data. We will prove that there is a unique $\theta$ corresponding to $\mathbb{P}_{X}$. More precisely, in what follows, we will compute the probabilities of some particular events, from which we will derive a unique expression for the unknown parameters. 
The beginning of the proof --identifiability of $\pil{}$ and $\alphal{}$-- is mimicking the  one given in \cite{celisse2012consistency}. The last steps of the proof are original work. 

\paragraph{Notations. } For the sake of simplicity, in what follows,  we use the following shorten notation: 
$$ x_{i:k} := (x_i, \dots,x_k),  \quad X_{j,i:k} = (X_{ji}, \dots, X_{jk})\,.$$
Moreover, $\{X_{j,i:k} = 1\}$  stands for $ \{X_{ji} = 1, \dots, X_{jk}=1\}$. 

\paragraph{Identifiability of $\pil{}$}
For any $l=1, \dots, \QL$, let $\tau_l$ be the following probability: 
\begin{equation}\label{eq:identif1}
    \tau_l = \pr{\XL{ij} = 1 | \ZL{i} = l} =  \sum_{l'} \alphal{ll'}\pil{l'} = (\alphal{} \cdot \pil{})_{l}, \quad  \forall (i,j). 
\end{equation}
Moreover, a quick computation proves that
\begin{equation}\label{eq:identif1bis}
\pr{\XL{i,j : (j+k)}  = 1 | \ZL{i} = l } =  \tau_l^{k+1}
\end{equation}
According to Assumption $\mathcal{A}$1, the coordinates of vector $(\tau_1, \dots, \tau_{\QL})$  are all different. Hence, the Vandermonde  matrix $R^O$ of size $\QL \times \QL$ such that 
\[ R^O_{il} = (\tau_l)^{i-1}, \quad 1 \leq i \leq \QL, \quad 1 \leq l \leq \QL \]
is   invertible. 
We define  $u^O_i$ as follows: 
\begin{equation*}
\begin{array}{ccll}
 u^O_i &=&  \mathbb{P}_{\obs,\theta}(\XL{1,2:(i+1)} = 1)  &  \mbox{ for }1 \leq i \leq 2\QL - 1 \\
 u^O_0 &=& 1.
 \end{array}
\end{equation*}
The existence of $(u^O_i)_{i=0, \dots, 2\QL - 1}$ comes from Assumption $\mathcal{A}$2 ($\nbl \geq 2 \QL$). Moreover, the $(u_i^O)_{i=0, \dots, 2\QL-1}$ are calculated from the marginal distribution  $\mathbb{P}_{X}$. We will use these quantities to identify the parameters $(\pil{}, \alphal{})$. 

\vspace{1em}

\noindent First we have, for  $1 \leq i \leq 2\QL - 1$: 
\begin{eqnarray*}\label{eq:ident1}
u^O_i &=& \sum_{l = 1}^{\QL}\pr{\XL{1,2:(i+1)} = 1 | \ZL{1} = l } \pr{\ZL{1} = l} = 
 \sum_{l = 1}^{\QL}  \tau_l^i  \pil{l}, 
\end{eqnarray*}
using Equation \eqref{eq:identif1bis}. 
Now, let us define $M^O$ a $(\QL +1) \times \QL$ matrix such that:
\begin{equation}\label{eq:ident2}
 M_{ij}^O = u^O_{i+j-2} = \sum_{l=1}^{\QL} \tau_{l}^{i-1} \pil{l} \tau_{l}^{j-1}, \quad 1 \leq i \leq \QL +1, \quad 1 \leq j \leq \QL.
\end{equation}
For $k \in \{1,\dots,\QL+1\}$, we define  $\delta_{k}$ as $\delta_k = \text{Det}(M^O_{-k})$ where    $M_{-k}^O$ is the square matrix corresponding to $M^O$ without the $k$-th row. 
Let  $B^O$ be the polynomial function defined as: \begin{equation}\label{eq:identif BO}
B^O(x) = \sum_{k = 0}^{\QL} (-1)^{k+\QL}\delta_{k+1} x^k. 
\end{equation}
\begin{itemize}
\item $B^O$ is of degree $\QL$. Indeed, $  \delta_{\QL +1 } = \mbox{det}(M_{-(\QL+1)}^O)$ 
and $M_{-(\QL+1) }= R^OD_{\pil{}}{R^O}'$ where $D_{\pil{}} = \text{diag}(\pil{})$. As a consequence, 
$M_{-(\QL+1)}^O$ is the product of invertible matrices then $\delta_{\QL+1} \neq 0$ and  we can conclude. 
 \item Moreover, $\forall l= 1, \dots, \QL$, $B^O(\tau_l) =0$ . Indeed,  $B^O(\tau_l) = \mbox{det}(N_l^O)$ where  $N_l^O$ is  the concatenated matrix   $N_l^O = \left(M^O\, |\, V_{l} \right)$ with  $V_{l} = [1, \tau_l, \dots, \tau_l^{\QL}]'$ (computation of the determinant development against the last column).  However, from Equation \eqref{eq:ident2}, we have $M^O_{\bullet j} = \sum_l \tau_{l}^{j-1} \pil{l} V_l$,  i.e. each column vector of $M^O$  is a linear combination of $V_1, \dots, V_{\QL}$. As a consequence,  $\forall l= 1, \dots, \QL$,  $N_l^O$ is of rank $< \QL +1$,  and so $B^O(\tau_l)=0$. 
\end{itemize}
The $(\tau_l)_{ l= 1, \dots, \QL}$ being the  roots of $B$,  they can be expressed in a unique way (up to   label switching)  as functions of $(\delta_k)_{k=0,\dots, \QL}$, which themselves are derived from  $\mathbb{P}_{\obs,\theta}$.  As a consequence,  the identifiability of $R^O$ is derived from the identifiability of $(\tau_l)_{ l= 1, \dots, \QL}$. Using the fact that  $ D_{\pil{}} = {R^O}^{-1}M_{-\QL}^{O}{R^O}^{'-1}$, we can identify $\pil{}$ in a unique way.

\paragraph{Identifiability of $\alphal{}$}
 For $1 \leq i, j \leq \QL$,  we define $U_{ij}$ as follows: 
$$
U^O_{ij} = \pr{ \XL{1,2:(i+1)}  = 1,  \XL{2,\, (\nbl - j + 2):\nbl} =1 }$$
with 
$ U^O_{i1} = \pr{ \XL{1,2:(i+1)} =1 }$.
\begin{align*}
U^O_{i, j}  = \sum_{l_1, l_2} \tau_{l_1}^{i-1} \pil{l_1} \alphal{l_1l_2} \pil{l_2} (\tau_{l_2})^{j-1}, \quad  \forall 1 \leq i, j \leq \QL, 
\end{align*}
 and as consequence
 $U^O = R^O D_{\pil{}} \alphal{} D_{\pil{}} {R^O}'.$  
$D_{\pil{}}$  and $R^O$ being  invertible, we get: 
$ \alphal{} = D_{\pil{}}^{-1} {R^O}^{-1} U^O {R^O}^{'-1} D_{\pil{}}^{-1}$. And so $U_O$ is uniquely derived from  $\mathbb{P}_{X}$, so $\alphal{}$ is identified. 

\paragraph{Identifiability of $\alphar{}$}
To identify $\alphar{}$, we have to take into account the affiliation matrix $\aff$. 
Without  loss of generality, we reorder the entries of both levels such that the affiliation    matrix  $\aff$ has its $2\QR \times 2\QR$ top left  block being an identity matrix  (Assumption $\mathcal{A}3$). 
\\ 

\begin{itemize}
\item  For any $k  = 1 \dots, \QR$ and for $i = 2, \dots, 2\QR$, let $\sigma_{k}$ be the    probability $ \pr{\XR{1i} = 1 | \ZR{1} = k, A}$, $A$ being such that $A_{jj} = 1, \forall  j= 1, \dots, 2\QR$. 
\begin{align*}
\sigma_{k} &= \pr{\XR{1i} = 1 | \ZR{1} = k, A} \\
& = \sum_{k'} \pr{\XR{1i} = 1 | \ZR{1} = k, \ZR{i} = k'} \pr{\ZR{i} = k' |  \ZR{1} = k,A}\,.\\
\end{align*}
Moreover, 
\begin{eqnarray}\label{eq:identif10}
\pr{\ZR{i} = k' |  \ZR{1} = k,A} &=&  \sum_{l}  \pr{\ZR{i} = k' |  \ZL{i}  = l,  \ZR{1} = k, A}\pr{\ZL{i} =l |  \ZR{1} = k,A}\nonumber \\
&=&   \sum_{l} \gamma_{kl} \pr{\ZL{i} =l |  \ZR{1} = k,A}\,.
\end{eqnarray}
However, by Bayes' formula
$$ \pr{\ZL{i} =l |  \ZR{1} = k,A}  = \frac{\pr{ \ZR{1} = k | \ZL{i} =l , A}\pr{\ZL{i} =l}}{\pr{\ZR{1} = k,A}}\,.$$
Taking into the fact that $i \neq 1$ and $A$ is such that $1$ belongs to  organization  $1$ and $i$ to organization $i$, we have:  $\pr{ \ZR{1} = k | \ZL{i} =l , A} = \pr{ \ZR{1} = k | A }$.  And so $$\pr{\ZL{i} =l |  \ZR{1} = k,A} = \pr{ \ZL{i} =l| A }  = \pil{l}.$$ 
Consequently, from Equation \eqref{eq:identif10}, we have: 
$$\pr{\ZR{i} = k' |  \ZR{1} = k,A} =  \sum_{l}  \gamma_{k' l} \pil{k}$$ 
and so: 
\begin{align*}
\sigma_{k}  &=  \sum_{k'}   \pr{\XR{1i} = 1 | \ZR{1} = k, \ZR{i} = k'}  \sum_{l}  \gamma_{k' l} \pil{k} \\
& = \sum_{k'l} \alphar{kk'}\gamma_{k'l}\pil{l} = (\alphar{} \cdot \gamma \cdot \pil{})_{k} \\
& = (\alphar{}  \cdot \pir{})_{k}, \quad \quad  \mbox{ where }  \pir{} = \gamma\cdot\pil{}. 
\end{align*}

\item 
Now, we prove that $\forall i = 1, \dots, 2\QR -1$, 
  \begin{equation} \label{eq:identif3} \pr{\XR{1,2: (i+1)} =  1 |\ZR{1} = k,A} =\sigma_k ^i.\end{equation}  
Indeed, 
 \begin{align*}
&\pr{\XR{1,2:(i+1)} = 1 |\ZR{1} = k,A} \\
 =&\sum_{k_{2:(i+1)}}\pr{\XR{1,2:(i+1)} = 1 |\ZR{1:(i+1)} = (k,k_{2:(i+1)}),\ZR{1} = k }  \pr{\ZR{2:(i+1)}=k_{2:i+1}|\ZR{1} = k,A}\\
 =&\sum_{k_{2:(i+1)}}\pr{\XR{1,2:(i+1)} = 1 |\ZR{1:(i+1)} = (k,k_{2:(i+1)}) }  \pr{\ZR{2:(i+1)}=k_{2:i+1} | A}\\
=&\sum_{k_{2:(i+1)}}\pr{\XR{1,2:(i+1)} = 1 |\ZR{1:(i+1)} = (k,k_{2:(i+1)}) }  \sum_{l_{2:(i+1)}} \pr{ \ZR{2:(i+1)} = k_{2:(i+1)}, \ZL{2:(i+1)}=l_{2:(i+1)} ,A}.   
\end{align*} 
Note that, to go from line $2$ to line $3$, we used the fact that $  \pr{\ZR{2:(i+1)}=k_{2:i+1}|\ZR{1} = k,A} =  \pr{\ZR{2:(i+1)}=k_{2:i+1} | A}$, which is due the the particular structure of $A$ (left diagonal block of size at least $2\QR$, i.e. for any  $ i'= 1,\dots , 2 \QR$,    individual $i'$ belongs to organization $i'$). Moreover,  we can write:  
\begin{eqnarray*}
 &&  \pr{ \ZR{2:(i+1)} = k_{2:(i+1)}, \ZL{2:(i+1)}=l_{2:i+1} |A} \\
&=&  \left[ \prod_{\lambda=2,\ldots i+1}   \pr{\ZR{\lambda} =k_\lambda |\ZL{\lambda}  = l_{\lambda}}  \pr{\ZL{\lambda} = l_{\lambda}} \right] \\
&=&   \left[ \prod_{\lambda=2,\ldots i+1}    \gamma_{k_\lambda l_\lambda}   \pil{\lambda}\right] \,. \\
\end{eqnarray*}

Moreover, by conditional independence of the entries of the matrix $\XR{}$ given the clustering we have: 
$$\pr{\XR{1,2:(i+1)} = 1 |\ZR{1} = k,\ZR{2:(i+1)}=k_{2:(i+1)}   } = \prod_{\lambda=2,\ldots i+1}  \alphar{kk_\lambda}.$$
As a consequence, 
\begin{align*}
\pr{\XR{1,2:(i+1)} = 1 |\ZR{1} = k,A} &=  \sum_{k_{2:(i+1)},l_{2:(i+1)}} \prod_{\lambda=2,\ldots i+1} \alphar{kk_\lambda} \gamma_{k_\lambda l_\lambda} \pil{\lambda}\\ 
 &=\prod_{\lambda=2,\ldots i+1} \sum_{k_\lambda,l_\lambda} \alphar{kk_\lambda} \gamma_{k_\lambda l_\lambda} \pil{\lambda}\nonumber  = \sigma_k^{i}
 \end{align*}
 
\item Then we define $(u_i^I)_{i=0, \dots, 2 \QR - 1}$, such that  $u^I_0 =1$  and   $ \forall 1\leq i \leq  2 \QR - 1$: 
\begin{align*}
u^I_i & = \pr{\XR{1,2:(i+1)} = 1  | \aff} \\
    & = \sum_{k,   l} \pr{\XR{1,2:(i+1)} = 1   |\ZR{1} = k}  \pr{\ZR{1} = k | \ZL{1} = l, A} \pr{\ZL{1} = l} \\
    & = \sum_{k} \sigma_k^{i} \underbrace{ \sum_{l}  \gamma_{kl}\pil{l}}_{ =\pir{k}} \\
    & = \sum_{k } \sigma_k^{i}  \pir{k}.   
\end{align*}
Note that the $(u^I)$'s can be defined because $\nbr \geq 2 \QR$ (assumption $\mathcal{A}2$). 

 \item  To conclude we use the same arguments as the ones used for the identifiability of $\alphal{}$, i.e.   we  define $M^I$ a  $(\QR +1)  \times \QR$ matrix such that $M_{ij}^I = u^I_{i+j-2}$ together with the matrices $M^I_{-k}$ and  the polynomial function $B^I$ (see Equation \eqref{eq:identif BO}). 
Let $R^I$ be a $\QR \times \QR$ matrix such that $R^I_{ik} = \sigma_k^{i-1}$. $R^I$ is an invertible Vandermonde matrix because of assumption $\mathcal{A}1$ on $\alphar{}\cdot  \gamma \cdot \pil{}$.  As before, $R^I$ can be identified in unique way from $B^I$. 
Then, noting that $M_{-(\QR+1)}^{I}  = R^I D_{\pir{}} {R^I}^{'}$ where $D_{\pir{}} = \text{diag}(\pir{}) = \text{diag} (\gamma \cdot \pil{})$, we obtain: 
$D_{\pir{}} = ({R^I})^{-1}M_{-\QR}^{I}({R^I}^{'-1}),$
which is uniquely defined by $\mathbb{P}_X$. 
Now, let us introduce 
$$U^I_{ij} =\pr{\XR{1,2:(i+1)} = 1,   \XR{2,\, (\nbr - j + 2):\nbr} = 1}$$
with 
$ U^I_{i1} = \pr{\XR{1,2:(i+1)}=1}$.
Then we have  $U^I = R^I D_{\pir{}} \alphar{} D_{\pir{}} {R^I}'$ and so \\
$ \alphar{} = D_{\pir{}}^{-1} (R^I)^{-1} U^I (R^I)'^{-1} D_{\pir{}}^{-1}$. 
As a consequence, $\alphar{}$ is uniquely identified from $\mathbb{P}_X$. 
\end{itemize}

\paragraph{Identifiability of $\gamma$}

For any $2 \leq i \leq \QR $  and $2 \leq j \leq \QL$, let $U^{IO}_{i,j}$  be the probability that  $\XR{1,2:i} = 1$ and $\XL{1,(i+1):(i+j-1)}=1$.  Note that the $U^{IO}_{i,j}$ can be defined because $\nbl \geq \QR + \QL -1$   and $\nbr \geq \QR$ (assumption $\mathcal{A}2$).
\begin{itemize}
\item Then, for all $2 \leq i \leq \QR $  and $2 \leq j \leq \QL$,
\begin{eqnarray}\label{eq:identif9}
U^{IO}_{ij} & = &  \pr{\XR{1,2:i} = 1, \XL{1,(i+1):(i+j-1)}=1 | \aff}\nonumber\\
& = &\sum_{k, l}\pr{\XR{1,2:i} =1, \XL{1,(i+1):(i+j-1)}=1 | \aff, \ZR{1} = k, \ZL{1} = l}\nonumber\\
&& \times\pr{\ZR{1} = k, \ZL{1} = l,A} \,.  
\end{eqnarray}

\item 
We first prove that : 
\begin{eqnarray}\label{eq:identif8}
  \pr{\XR{1,2:i} = 1, \XL{1,i+1 : i+j - 1} =1 | \aff, \ZR{1} = k, \ZL{1} = l}  = 
  \sigma_k^{i-1} \tau_{l}^{j-1}\,.
\end{eqnarray}
Indeed, 
 \begin{eqnarray}\label{eq:identif7}
 &&\mathbb{P}(\XR{1, 2:i} = 1,  \XL{1,(i+1) :( i+j-1)}= 1 | \aff, \ZR{1} = k, \ZL{1} = l)  = \nonumber \\
&& = \sum_{k_{2 : i},l_{2: \nbl}}\pr{\XR{1, 2:i} =1,   \XL{1,(i+1) :( i+j-1)}= 1 |\ZR{1: i} = (k,k_{2 : i}), \ZL{}  =(l,l_{2: \nbl}),A}\nonumber \\
&&  \quad  \quad \times \quad  \pr{\ZR{2: i} = k_{2 : i}, \ZL{2:\nbl}  =l_{2: \nbl}|\ZR{1} = k, \ZL{1} = l,A}\nonumber \\
&&=  \sum_{k_{2 : i},l_{2: \nbl}}\pr{\XR{1, 2:i} = 1 |\ZR{1: i} = (k,k_{2 : i})} \nonumber \\
&& \quad  \quad \times \quad \pr{ \XL{1,(i+1) :( i+j-1)}= 1 | \ZL{1}  =l, \ZL{(i+1) : (i+j-1)}  =l_{(i+1) : (i+j-1)}}\nonumber \\
&& \quad  \quad \times \quad \pr{\ZR{2: i} = k_{2 : i}, \ZL{2:\nbl}  =l_{2: \nbl}|\ZR{1} = k, \ZL{1} = l,A} \,.
\end{eqnarray}
Moreover,  let us have a look at $\pr{\ZR{2: i} = k_{2 : i}, \ZL{}  =l_{2: \nbl}|\ZR{1} = k, \ZL{1} = l,A} $: 
\begin{eqnarray*}
&&\pr{\ZR{2: i} = k_{2 : i}, \ZL{2:\nbl}=l_{2: \nbl}|\ZR{1} = k, \ZL{1} = l,A}   \\
&=&\pr{\ZR{2: i} = k_{2 : i} |  \ZL{2:\nbl}  = l_{2: \nbl},\ZR{1} = k, \ZL{1} = l,A}  \times   \pr{\ZL{2:\nbl}  =l_{2: \nbl}|\ZR{1} = k, \ZL{1} = l,A}\,.
\end{eqnarray*}
Because $A$ has a diagonal block of size $\geq  \QR$ , we have,  for any $i = 1 ,\dots, \QR $, $A_{ij} = 1$ if $j = i$, $0$ otherwise, we have
\begin{itemize}
\item[$\bullet$] 
$   \pr{\ZR{2: i} =  k_{2 : i} |  \ZL{2:\nbl}  = l_{2: \nbl},\ZR{1} = k, \ZL{1} = l,A}  =   \pr{\ZR{2: i} = k_{2 : i}  |  \ZL{2:i}  = l_{2: i} }, $ 
\item[$\bullet$]  
$ \pr{ \ZL{2:\nbl}  =l_{2: \nbl}|\ZR{1} = k, \ZL{1} = l,A}   =  \pr{\ZL{2:\nbl}  =l_{2: \nbl}}  \,.$  
\end{itemize}
As a consequence, 
\begin{eqnarray*}
&& \pr{\ZR{2: i} = k_{2 : i}, \ZL{2:\nbl}  =l_{2: \nbl} |\ZR{1} = k, \ZL{1} = l,A}  =\\
&& \quad  \pr{\ZR{2: i} = k_{2 : i}  |  \ZL{2:i}  = l_{2: i}} \pr{\ZL{2:i}  = l_{2: i}}       \pr{\ZL{(i+1) :( i+j-1)}  =l_{(i+1) :( i+j-1)}} \\
&&\times    \pr{\ZL{(i+j) : \nbl}  =l_{(i+j) : \nbl}}\,.
\end{eqnarray*}
Going back to Equation \eqref{eq:identif7} and decomposing the summation we obtain: 
\begin{eqnarray*}
 &&  \mathbb{P}(\XR{1, 2:i} = \XL{1,(i+1):(i+j-1)}=1 | \aff, \ZR{1} = k, \ZL{1} = l)  \\
&=&  \sum_{k_{2 : i},l_{2: \nbl}}\pr{\XR{1, 2:i} = 1 |\ZR{1: i} = (k,k_{2 : i})} \nonumber \\
&& \quad  \quad \times \quad \pr{ \XL{1,(i+1) :( i+j-1)}= 1 | \ZL{1}  =l, \ZL{(i+1) : (i+j-1)}  =l_{(i+1) : (i+j-1)}}\nonumber \\
&& \quad  \quad \times \quad   \pr{\ZR{2: i} = k_{2 : i}  |  \ZL{2:i}  = l_{2: i}}   \pr{\ZL{2:i}  = l_{2: i}} \pr{\ZL{(i+1) :( i+j-1)}  =l_{(i+1) :( i+j-1)}}   \\
&& \quad \quad  \times  \quad    \pr{\ZL{(i+j) : \nbl}  =l_{(i+j) : \nbl}} \\
&= &    \sum_{k_{2 : i}}\pr{\XR{1, 2:i} = 1 |\ZR{1: i} = (k,k_{2 : i})}\sum_{l_{2: i}}  \pr{\ZR{2: i} = k_{2 : i}  |  \ZL{2:i}  = l_{2: i} }   \pr{\ZL{2:i}  = l_{2: i}} \\
&&  \quad \quad \sum_{l_{(i+1): (i+j-1)}}  \pr{ \XL{1,(i+1) :( i+j -1 )}= 1 | \ZL{1}  =l, \ZL{(i+1) : (i+j-1)}  =l_{(i+1) : (i+j-1)})} \\
&& \quad  \quad  \quad  \quad  \quad \quad \times \underbrace{\pr{\ZL{(i+1) : (i+j-1)}  =l_{(i+1): (i+j-1)}}}_{=\pr{\ZL{(i+1) : (i+j-1)}  =l_{(i+1): (i+j-1)} |  \ZL{1}=l}} \underbrace{\sum_{l_{(i+j):\nbl}} \pr{\ZL{(i+j) : \nbl}  =l_{(i+j) : \nbl}}}_{=1}\\
&=&  \sum_{k_{2 : i}}\pr{\XR{1, 2:i} = 1 |\ZR{1} = k, \ZR{2: i} =k_{2 : i}} \pr{\ZR{2: i} = k_{2 : i} | A}  \times  \pr{ \XL{1,(i+1) :( i+j-1)}= 1 | \ZL{1}  =l}\\
&=&  \sum_{k_{2 : i}}\pr{\XR{1, 2:i} = 1 |\ZR{1} = k, \ZR{2: i} =k_{2 : i}} \pr{\ZR{2: i} = k_{2 : i} | \ZR{1} = k,A}\\
&&\times  \pr{ \XL{1,(i+1) :( i+j-1)}= 1 | \ZL{1}  =l}\\
&=& \pr{\XR{1, 2:i} = 1 |\ZR{1} = k,  A}  \pr{ \XL{1,(i+1) :( i+j-1)}= 1 | \ZL{1}  =l}\,.
\end{eqnarray*}
Finally, we have : 
\begin{eqnarray*}
 \pr{\XR{1, 2:i} = 1 |\ZR{1} = k,  A}&=& \sigma_k^{i-1},   \quad \mbox{from  Equation \eqref{eq:identif3} }  \\
 \pr{ \XL{1,(i+1) :( i+j-1)}= 1 | \ZL{1}  =l} &= &\tau_l ^{j-1},
\end{eqnarray*}
and so, we have proved equality \eqref{eq:identif8}. 
\item 
Now, $A_{11} = 1$ implies  $\pr{\ZR{1} = k, \ZL{1} = l|A} = \gamma_{kl}\pil{l} $  and combining this result with Equations   \eqref{eq:identif8} and  \eqref{eq:identif9}
leads to: 
$ U^{IO}_{ij} = \sum_{k, l}\sigma_{k}^{i-1}  \gamma_{k l} \pil{l}  \tau_{l}^{j-1}$. 
Setting 
\begin{eqnarray*}
U^{IO}_{1j} &=&   \pr{\XL{1,i+1}=1, \dots, \XL{1, i+j-1}=1 | \aff}  = \sum_{k, l}  \gamma_{k l} \pil{l}  \tau_{l}^{j-1}, \quad \mbox{for $j>1$} \\
U^{IO}_{i1} &=&   \pr{\XR{12} = \dots =  \XR{1,i} = 1 | \aff}  = \sum_{k, l}  \gamma_{k l} \pil{l},    \quad \mbox{for $i>1$}\\
U^{IO}_{11}&=& 1
\end{eqnarray*}
we obtain  the following  matrix expression for $U^{IO}$: 
$ U^{IO} = R^{I} \gamma D_{\pil{}} R^{O'}$ where $ U^{IO} $ is completely defined by $\mathbb{P}_{X,\theta}$ and the other terms have been identified before. 
Thus 
$ \gamma = (R^I)^{-1}U^{IO}(R^{O'})^{-1}D^{-1}_{\pil{}}$ and $\gamma$ is identified. 
\end{itemize}
\end{proof}

\section{Details of the Variational EM}
\label{appendix:vem}
The variational bound for the stochastic block model for multilevel network can be written as follows:
\begin{eqnarray*}
    \mathcal{I}_{\theta}(\mathcal{R}(\ZR{}, \ZL{} | \aff)) & =& 
     \sum_{j,l} \taul{jl}\log\pil{l}   + \sum_{i,k} \taur{ik} \sum_{j,l} \aff_{ij} \taul{jl}\log \gamma_{kl} \\
		&& + \quad \frac{1}{2} \sum_{i' \neq i} \sum_{k, k'} 
			\taur{ik}\taur{i'k'} \log \phi \left( \XR{ii'} , \alphar{kk'}\right) + \frac{1}{2} \sum_{j' \neq j} \sum_{l, l'} 
			\taul{jl}\taul{j'l'} \log \phi\left( \XL{jj'} \alphal{ll'} \right)\\
			&& - \sum_{i,k} \taur{ik}\log \taur{ik} - \sum_{j,l} \taul{jl}\log \taul{jl}
\end{eqnarray*}
The variational EM algorithm then consists on iterating the two following steps. At iteration $(t+1)$: 
\begin{description}
    \item[VE step] compute 
    \begin{align*} \{\taur{}, \taul{} \}^{(t+1)} & = \arg \underset{\taur{}, \taul{}}{\max}\; \mathcal{I}_{\theta^{(t)}}(\mathcal{R}(\ZR{}, \ZL{} | \aff))\\ 
    & = \arg \underset{\taur{}, \taul{}}{\min}\; \KL \left(\mathcal{R}(\ZR{}, \ZL{} | \aff) \| \mathbb{P}_{\theta^{(t)}}(\ZR{}, \ZL{} | \XR{}, \XL{}, \aff) \right) \,.
    \end{align*}
    \item[M step] compute 
    \[\theta^{(t+1)} = \arg \underset{\theta}{\max}\;  \mathcal{I}_{\theta}(\mathcal{R}^{(t+1)}(\ZR{}, \ZL{} | \aff)).\]
\end{description}
The variational parameters are sought by solving the equation: 
\begin{equation*}
    \Delta_{\taur{}, \taul{}}\left(\mathcal{I}_{\theta}(\mathcal{R}(\ZR{}, \ZL{}|\aff) + L(\taur{}, \taul{})\right)=0,
\end{equation*} where $L(\taur{}, \taul{})$ are the Lagrange multipliers for $\taur{i}$, $\taul{j}$ for all $i \in \{1, \dots, \nbr\}$, $j \in \{1, \dots, \nbr\}$.
There is no closed-form formula but when computing the derivatives, we obtain that the variational parameters follow the fixed point relationships:
\begin{align*}
	\widehat{\taul{jl}}  \propto & \pil{l} \prod_{i,k} \gamma_{kl}^{\aff_{ij}\widehat{\taur{ik}}}\prod_{j'\neq j}\prod_{l'}\phi(\XL{jj'}, \alphal{ll'})^{\widehat{\taul{j'l'}}} \\
    \widehat{\taur{ik}}  \propto &  \prod_{j,l} \gamma_{kl}^{\aff_{ij}\widehat{\taul{jl}}}\prod_{i'\neq i}\prod_{k'}\phi(\XR{ii'}, \alphar{kk'})^{\widehat{\taur{i'k'}}}\,,
\end{align*}
which are used in the VE step to update the $\taur{i}$'s  and $\taul{j}$'s.

On each update, the variational parameters of a certain level depend on both the parameter $\gamma$ and the variational parameters of the other level, which emphasizes the dependency structure of this multilevel model and the role of $\gamma$ as the dependency parameter of the model.
Notice also that when $\gamma_{kl} = \gamma_{kl'} = \pir{k}$ for all $l, l'$, that is the case of independence between the two levels then we can rewrite the fixed point relationships as follows:
\begin{align*}
	\widehat{\taul{jl}}  \propto & \pil{l}\prod_{j'\neq j}\prod_{l'}\phi(\XL{jj'}, \alphal{ll'})^{\widehat{\taul{j'l'}}} \quad \mbox{ and  } \quad 
    \widehat{\taur{ik}}  \propto   \pir{k}\prod_{i'\neq i}\prod_{k'}\phi(\XR{ii'}, \alphar{kk'})^{\widehat{\taur{i'k'}}},
\end{align*}
which is exactly the expression of the fixed point relationship of two independent SBMs.
Then, for the M step, we derive the following closed-form formulae:
\begin{align*}
	\widehat{\pil{l}} & =  \frac{1}{\nbl} \sum_j \widehat{\taul{jl}} & 
	\widehat{\alphal{ll'}} & = \frac{ \sum_{j' \neq j} \widehat{\taul{jl}}\XL{jj'}\widehat{\taul{j'l'}}}{\sum_{j' \neq j} \widehat{\taur{jl}}  \widehat{\taur{j'l'}} } \\
    \widehat{\gamma}_{kl} & = \frac{ \sum_{i,j}  \widehat{\taur{ik}} A_{ij} \widehat{\taul{jl}} }{ \sum_{i, j} A_{ij}\widehat{\taul{jl}} } &
	\widehat{\alphar{kk'}} & = \frac{ \sum_{i' \neq i} \widehat{\taur{ik}} \XR{ii'} \widehat{\taur{i'k'}}}{\sum_{i' \neq i} \widehat{\taur{ik}} \widehat{\taur{i'k'}} } 
\end{align*}
for which the gradient 
\begin{equation*}
    \Delta_{\theta}\left(\mathcal{I}_{\theta}(\mathcal{R}(\ZR{}, \ZL{}|\aff)) + L(\pil{}, \gamma )\right),
\end{equation*} is null.
The term $L(\pil{}, \gamma)$ contains the Lagrange multipliers for $\pil{}$ and $\gamma_{k\cdot}$ for all $k \in \{1, \dots, \QR\}$.

Model parameters have natural interpretations. $\pil{l}$ is the mean of the posterior probabilities for the organizations to belong to block $l$. 
$\alphar{kk'}$ (resp. $\alphal{ll'}$) is the ratio of existing links over possible links between blocks $k$ and $k'$ (resp. $l$ and $l'$). $\gamma_{kl}$ is the ratio of the number of individuals in block $k$ that are affiliated to any organization of block $l$ on the number of individuals that are affiliated to any organization of block $l$. If $\gamma$ is such that  the levels are independent, then any column of $\gamma$ represents the proportion of individuals in the different blocks: 
\begin{equation*}
    \pir{k} = \gamma_{k1} = \frac{1}{\nbr}\sum_{i}\widehat{\taur{ik}}.
\end{equation*}

\section{Details of the ICL criterion}
\label{appendix:icl}
We now derive an expression for the Integrated Complete Likelihood (ICL) model selection criterion.  Following \cite{daudin2008mixture}, the ICL is based on the  integrated complete likelihood i.e. the likelihood of the observations and the latent variables where the parameters have been  integrating  out against a prior distribution.    The latent variables $(\ZR{},\ZL{})$ being unobserved, they are imputed using the maximum a posteriori (MAP)  or $\hat \tau$.  We denote by $\widehat{\ZL{}}$ and $\widehat{\ZR{}}$ the inputed latent variables.  After imputation of the latent variables, an asymptotic approximation of this quantity   leads to the ICL criterion given in the paper (Equation \eqref{eq:icl}) and recalled here:
\begin{align*}
ICL(\QR, \QL) &= \log \ell_{\widehat{\theta}}(\XR{}, \XL{}, \widehat{\ZR{}}, \widehat{\ZL{}} | A, \QR, \QL) \\
    & - \frac{1}{2}\frac{\QR(\QR+1)}{2} \log \frac{\nbr(\nbr-1)}{2} - \frac{\QL(\QR-1)}{2} \log \nbr \\
  & - \frac{1}{2}\frac{\QL(\QL+1)}{2} \log \frac{\nbl(\nbl-1)}{2} - \frac{\QL-1}{2} \log \nbl . \\
\end{align*}
Let $\Theta = \Pi^O \times \mathcal{A}^I \times \mathcal{A}^O \times \Gamma$ be the space of the model parameters.  
We set a prior distribution on $\theta$: 
$$p(\theta| \QR,\QL) =  p(\gamma{}|\QR,\QL)p(\pil{}|\QL)p(\alphar{}|\QR)p(\alphal{}|\QL)$$
where   $p(\pil{}|\QL)$ is a Dirichlet distribution of hyper-parameter $(1/2, \cdots,1/2)$ and $p(\alphar{}|\QR)$ and $p(\alphal{}|\QL)$ are independent Beta distributions. \\
The marginal complete likelihood is written as follows:  
\begin{eqnarray}
\log \ell_{\theta} (\mathbf{X}, \mathbf{Z} | \aff, \QR, \QL) &=& 
	\log \left(\int_{\Theta} \ell_{\theta}( \XR{}, \XL{}, \ZR{}, \ZL{} | \theta, \aff, \QR, \QL) p(\theta| \QR, \QL) \text{d}\theta \right)  \nonumber\\
    & =&  \log \ell_{\alphar{}}(\XR{} | \ZR{}, \QR)  \label{marg:term1}   \\
&& +   \log  \ell_{\gamma}(\ZR{} | \aff, \ZL{}, \QR, \QL)       \label{marg:term2} \\
    &&      + \log \ell_{\alphal{},\pil{}}(\XL{} ,  \ZL{} | \QL)   \label{marg:term3} \,.
\end{eqnarray}
The quantity defined in \eqref{marg:term3} evaluated at $\ZL{} := \widehat{\ZL{}}$  is approximated as in \cite{daudin2008mixture} by 
\begin{equation}\label{eq:icl_term3}
\begin{array}{ccl}
 \log \ell_{\alphal{}}(\XL{} ,   \widehat{\ZL{}}, \QL)   &\approx&_{\nbl \rightarrow \infty}  \log \ell_{\widehat{\alphal{}},\widehat{\pil{}}}(\XL{}, \widehat{\ZL{}}  | \QL)  -    \mbox{pen}(\pil{},\alphal{}, \QL)\\
  \quad \mbox{pen}(\pil{},\alphal{}, \QL)  &=&  \frac{\QL - 1}{2} \log \nbl +\frac{1}{2}\frac{\QR(\QR+1)}{2} \log \frac{\nbr(\nbr-1)}{2}
\end{array}\,.
\end{equation}
This approximation results from a BIC-type approximation of $  \log \ell_{\widehat{\alphal{}}}(\XL{}| \widehat{\ZL{}},\QL) $ and a Stirling approximation of   $  \log \ell_{\pil{}}( \widehat{\ZL{}},\QL) $. \\ The same BIC-type approximation on  $ \log \ell_{\alphar{}}(\XR{} |  \widehat{\ZR{}} , \QR) $ (Equation \eqref{marg:term1})  leads to: 
\begin{equation} \label{eq:icl_term1}
\begin{array}{rcl}
 \log \ell_{\alphar{}}(\XR{} | \widehat{\ZR{}}, \QR)  &=&_{\nbr \rightarrow \infty}  \log \ell_{\widehat{\alphar{}}}(\XR{} |   \widehat{\ZR{}}, \QR) + \mbox{pen}(\alphar{}, \QR) \\
 \mbox{ with }      \mbox{pen}(\alphar{}, \QR) & =& \frac{1}{2}\frac{\QR(\QR+1)}{2} \log \frac{\nbr(\nbr-1)}{2} \\
 \end{array}\,.
  \end{equation}

For quantity \eqref{marg:term2} depending on $\gamma$ and $\ZR{}$ given $(\QR, \QL)$, we have to adapt the calculus. Let us set  independent  Dirichlet prior distributions of order $\QR$   $\mathcal{D}(1/2, \dots, 1/2)$ on the columns $\gamma_{\cdot l}$. We are able to derive an exact expression of $   \log  \ell_{\gamma}(\ZR{} | \aff, \ZL{}, \QR, \QL) $:  
\begin{align}
	\ell_{\gamma}(\ZR{} | \aff, \ZL{}, \QR, \QL) 
	& = \int \ell (\ZR{} | \aff, \ZL{}, \gamma, \QR, \QL) p(\gamma, \QR, \QL) \text{d}\gamma \nonumber\\
    & = \prod_{i}\int \prod_{j, k, l} \gamma_{kl}^{\aff_{ij}\ZR{ik}\ZL{jl}}p(\gamma_{kl})\text{d}\gamma_{kl} \nonumber\\
    & = \prod_{l} \int \prod_{k} \gamma_{kl}^{N_{kl}}p(\gamma_{kl})\text{d}\gamma_{kl}, \quad \mbox{where} \quad  N_{kl} = \sum_{ij}\aff_{ij}\ZR{ik}\ZL{jl}\nonumber \\
    &=  \prod_{l} \int \prod_{k} \gamma_{k,l}^{N_{kl} + a - 1}\frac{\Gamma(1/2\cdot \QR)}{\Gamma(1/2)^{\QR}}\text{d}\gamma_{kl} \nonumber\\ 
    &= \frac{\Gamma(1/2\QR)^{\QL}}{\Gamma(1/2)^{\QL+\QR}} \prod_{l} \frac{\prod_{k}\Gamma(N_{kl} + 1/2)}{\Gamma(1/2\QR + \sum_{k}N_{kl})} \,.\nonumber 
\end{align}
Now,  using the fact that  $\log \Gamma(n+1) \overset{n \to \infty}{\sim} (n + 1/2) \log n + n$, we obtain: 
\begin{equation}\label{eq:ICL17}
\begin{array}{ccl}
\log \ell_{\gamma}(\ZR{} |\aff, \ZL{},\QR, \QL) &\approx&_{ (\nbl,\nbr) \rightarrow \infty}  \sum_{k,l} (N_{kl}  \log N_{kl} + N_{kl}) \\  
&& - \sum_{l} 
 \left(\frac{\QR - 1}{2} + \sum_{k}N_{kl}\right)\log\left(\sum_k N_{kl} \right) - \sum_{k,l} N_{kl}.
\end{array}
\end{equation}
The quantity  \eqref{eq:ICL17} evaluated at $(\ZR{}, \ZL{} ):= (\widehat{\ZR{}}, \widehat{\ZL{}})$  can be reformulated in the following way: 
\begin{equation*}
\begin{array}{ccl}
\log \ell_{\gamma}(\widehat{\ZR{}} |\aff, \widehat{\ZL{}},\QR, \QL)   &\approx&_{ (\nbl,\nbr) \rightarrow \infty} \log \ell_{\hat{\gamma}} ( \widehat{\ZR{}}  | \aff,  \widehat{\ZL{}},\QR, \QL ) - \frac{\QR - 1}{2} \sum_{l} \log \sum_{i,j} \aff_{ij}\widehat{\ZL{jl}}\\
\mbox{ with }\hat{\gamma}_{kl}  &=&  \frac{ \sum_{i,j} \widehat{\ZR{ik}}\aff_{ij}\widehat{\ZL{jl}} }{ \sum_{i,j} \aff_{ij}\widehat{\ZL{jl}} }
\end{array}
\end{equation*}
Noticing that 
$
    \log \sum_{i,j} \aff_{ij}\widehat{\ZL{jl}} = \log \nbr + \log \frac{\sum_{i,j} \aff_{ij}\widehat{\ZL{jl}}}{\nbr}   = O(\log  \nbr)
$
leads to 
\begin{equation}\label{eq:icl_term2}
\log \ell_{\gamma}(\widehat{\ZR{}} |\aff, \widehat{\ZL{}},\QR, \QL)   \approx_{ (\nbl,\nbr) \rightarrow \infty} \log \ell_{\hat{\gamma}} ( \widehat{\ZR{}}  | \aff,  \widehat{\ZL{}},\QR, \QL ) - \frac{\QR - 1}{2} \QL \log  \nbr\,.
\end{equation}
Combining Equations \eqref{eq:icl_term3}, \eqref{eq:icl_term1} and \eqref{eq:icl_term2} we obtain the given expression.

\end{document}